\renewcommand{\le}{\leqslant}
\renewcommand{\ge}{\geqslant}
\newcommand{\eps}{\varepsilon}
\newcommand{\Sig}{\Sigma}
\newcommand{\sig}{\sigma}
\newcommand{\noin}{\noindent}
\newcommand{\bi}{\begin{itemize}}
\newcommand{\ei}{\end{itemize}}
\newcommand{\be}{\begin{enumerate}}
\newcommand{\ee}{\end{enumerate}}
\newcommand{\bd}{\begin{description}}
\newcommand{\ed}{\end{description}}
\newcommand{\bq}{\begin{quote}}
\newcommand{\eq}{\end{quote}}
\newcommand{\txt}[1]{\mbox{ #1 }}
\newcommand{\defeq}{\stackrel{\rm def}{=}}
\newcommand{\ie}{\mbox{\it i.e.}}
\newcommand{\trng}{\mathrm{rng}}
\newcommand{\tdom}{\mathrm{dom}}
\newcommand{\trank}{\mathrm{rank}}
\newcommand{\tid}{\mbox{{\bf 1}}}
\newcommand{\cA}{{\mathcal A}}
\newcommand{\cB}{{\mathcal B}}
\newcommand{\cC}{{\mathcal C}}
\newcommand{\cE}{{\mathcal E}}
\newcommand{\cF}{{\mathcal F}}
\newcommand{\cG}{{\mathcal G}}
\newcommand{\cP}{{\mathcal P}}
\newcommand{\cM}{{\mathcal M}}
\newcommand{\cN}{{\mathcal N}}
\newcommand{\cT}{{\mathcal T}}
\newcommand{\cS}{{\mathcal S}}
\newcommand{\gL}{{\mathcal L}}
\newcommand{\gR}{{\mathcal R}}
\newcommand{\gJ}{{\mathcal J}}
\newcommand{\gH}{{\mathcal H}}
\newcommand{\rmPi}{{\mathrm \Pi}} 
\newcommand{\Lra}{{\Leftrightarrow}}
\newcommand{\lra}{{\leftrightarrow}}
\newcommand{\ra}{{\rightarrow}}
\newcommand{\lraL}{{\mathbin{\approx_L}}}
\newcommand{\meet}{{\mathbin{\wedge}}}
\newcommand{\join}{{\mathbin{\vee}}}
\newcommand{\qedb}{\hfill$\blacksquare$}
\newcommand{\rev}{\mathbb{R}}
\newcommand{\deter}{\mathbb{D}}
\DeclareMathOperator{\Fix}{Fix}
\DeclareMathOperator{\Orbit}{\Omega}
\DeclareMathOperator{\Max}{Max} 
\title{Syntactic Complexity of $\gR$- and $\gJ$-Trivial Regular Languages\thanks{This work was supported by the Natural Sciences and Engineering Research Council of Canada under grant No.~OGP0000871 and  a Postgraduate Scholarship.
}
}
\author{Janusz~Brzozowski, Baiyu Li\thanks{Present address: 
Optumsoft, Inc.,
275 Middlefield Rd, Suite 210, Menlo Park, CA 94025, USA}
}
\authorrunning{Brzozowski, Li}   
\institute{David R. Cheriton School of Computer Science, University of Waterloo \\
Waterloo, ON, Canada N2L 3G1\\
\{{\tt \{brzozo, b5li\}@uwaterloo.ca} \}
}
\begin{document}

\maketitle
%\today
\begin{abstract}
The syntactic complexity of a subclass of the class of regular languages is the maximal cardinality of syntactic semigroups of languages in that class, taken as a function of the state complexity $n$ of these languages.
We prove that $n!$ and $\lfloor e(n-1)! \rfloor$ are tight upper bounds for 
the syntactic complexity of $\gR$- and $\gJ$-trivial regular languages, respectively. 
We also prove that $2^{n-1}$ is the tight upper bound on the state complexity of reversal of $\gJ$-trivial regular languages. 
\bigskip

\noin
{\bf Keywords:}
finite automaton, $\gJ$-trivial, monoid, regular language, reversal, $\gR$-trivial, semigroup, syntactic complexity 
\end{abstract}

\section{Introduction}\label{sec:intro}

The \emph{state complexity} of a regular language $L$ is the number of states in the minimal deterministic finite automaton (DFA) accepting $L$. An equivalent notion is \emph{quotient complexity}, which is the number of distinct left quotients of $L$. 
The \emph{syntactic complexity} of $L$ is the cardinality of the syntactic semigroup of $L$. 
Since the syntactic semigroup of $L$ is isomorphic to the semigroup of transformations performed by the minimal DFA of $L$, it is natural to consider the relation between syntactic complexity and state complexity. The \emph{syntactic complexity of a subclass of regular languages} is the maximal syntactic complexity of languages in that class, taken as a function of the state complexity of these languages.

Here we consider the classes of languages defined using the well-known Green equivalence relations on semigroups~\cite{Pin97}. Let $M$ be a monoid, that is, a semigroup with an identity, and let $s, t \in M$ be any two elements of $M$. The Green relations on $M$, denoted by $\gL,\gR,\gJ$ and $\gH$, are defined as follows: 
\begin{align*}
 s \mathbin{\gL} t &\mathbin{\Lra} Ms = Mt, \\
 s \mathbin{\gR} t &\mathbin{\Lra} sM = tM, \\
 s \mathbin{\gJ} t &\mathbin{\Lra} MsM = MtM, \\
 s \mathbin{\gH} t &\mathbin{\Lra} s \mathbin{\gL} t \txt{and} s \mathbin{\gR} t.
\end{align*}
If $\rho \in \{\gL, \gR, \gJ, \gH\}$ is an equivalence relation on $M$, then $M$ is \emph{$\rho$-trivial} if and only if $(s, t) \in \rho$ implies $s = t$ for all $s, t \in M$. A language is \emph{$\rho$-trivial} if and only if its syntactic monoid is $\rho$-trivial. 
In this paper we consider only regular $\rho$-trivial languages. 
$\gH$-trivial regular languages are exactly the star-free languages~\cite{Pin97}, and $\gL$-, $\gR$-, and $\gJ$-trivial regular languages are all subclasses of star-free languages. 
The class of $\gJ$-trivial languages is the intersection of $\gR$- and $\gL$-trivial classes. 

A language $L \subseteq \Sig^*$ is \emph{piecewise-testable} if it is a finite boolean combination of languages of the form $\Sig^* a_1 \Sig^* \cdots \Sig^* a_l \Sig^*$, where $a_i \in \Sig$. Simon~\cite{Sim72,Sim75} proved in 1972 that a language is piecewise-testable if and only if it is $\gJ$-trivial. A \emph{biautomaton} is a finite automaton which can read the input word alternatively from left and  right. In 2011 Kl\'{\i}ma and Pol{\'a}k~\cite{KP11} showed that a language is piecewise-testable if and only if it is accepted by an acyclic biautomaton; here self-loops are allowed, as they are not considered cycles. 

In 1979 Brzozowski and Fich~\cite{BrFi80} proved that a regular language is $\gR$-trivial if and only if its minimal DFA is \emph{partially ordered}, that is, it is acyclic as above. They also showed that $\gR$-trivial regular languages are  finite boolean combinations of languages $\Sig_1^* a_1\Sig^* \cdots \Sig_l^* a_l \Sig^*$, where $a_i \in \Sig$ and $\Sig_i \subseteq \Sig \setminus \{a_i\}$. Recently Jir{\'a}skov{\'a} and Masopust proved a tight upper bound on the state complexity of reversal of $\gR$-trivial languages~\cite{JiMa12}. 

The syntactic complexity of the following subclasses of regular languages was considered: In 1970 Maslov~\cite{Mas70} noted that $n^n$ was a tight upper bound on the number of transformations performed by a DFA of $n$ states. In 2003--2004, Holzer and K\"onig~\cite{HoKo04}, and Krawetz, Lawrence and Shallit~\cite{KLS03} studied unary and binary languages. In 2010 Brzozowski and Ye~\cite{BrYe11} examined ideal and closed regular languages. In 2012 Brzozowski, Li and Ye studied prefix-, suffix-, bifix-, and factor-free regular languages~\cite{BLY12}, Brzozowski and Li~\cite{BL12a} considered the class of star-free languages and three of its subclasses, and Brzozowski and Liu~\cite{BrLiu12} studied finite/cofinite, definite, and reverse definite languages, where 
$L$ is \emph{definite} (\emph{reverse-definite}) if it can be decided whether a word $w$ belongs to $L$ by examining the suffix (prefix) of $w$ of some fixed length.

We state basic definitions and facts in Section~\ref{sec:pre}. In Sections~\ref{sec:Rtrivial} and~\ref{sec:Jtrivial} we prove tight upper bounds on the syntactic complexities of $\gR$- and $\gJ$-trivial regular languages, respectively. 
In Section~\ref{sec:rev} we prove the tight upper bound on the quotient complexity of reversal of $\gJ$-trivial regular languages, and we show that this bound can be met by our languages with maximal syntactic complexities. 
Section~\ref{sec:con} concludes the paper. 
%
%Omitted proofs are in the appendix.
%can be found in~\cite{BrLi12b}.

\section{Preliminaries}\label{sec:pre}

Let $Q$ be a non-empty finite set with $n$ elements, and assume without loss of generality that $Q = \{1,2,\ldots, n\}$. There is a linear order on $Q$, namely the natural order $<$ on integers. If $X$ is a non-empty subset of $Q$, then the maximal element in $X$ is denoted by $\max(X)$. A \emph{partition} $\pi$ of $Q$ is a collection $\pi = \{X_1, X_2, \ldots, X_m\}$ of non-empty subsets of $Q$ such that  $Q = X_1 \cup X_2 \cup \cdots \cup X_m$, and
$X_i \cap X_j = \emptyset$ for all $1 \le i < j \le m$.
%\be
%%
%\item $Q = X_1 \cup X_2 \cup \cdots \cup X_m$, and
%\item $X_i \cap X_j = \emptyset$ for all $1 \le i < j \le m$. 
%\ee
%
%
We call each subset $X_i$ a \emph{block} in $\pi$. For any partition $\pi$ of $Q$, let $\Max(\pi) = \{\max(X) \mid X \in \pi\}$. The set of all partitions of $Q$ is denoted by $\rmPi_Q$. We define a partial order $\preceq$ on $\rmPi_Q$ such that, for any $\pi_1, \pi_2 \in \rmPi_Q$, $\pi_1 \preceq \pi_2$ if and only if each block of $\pi_1$ is contained in some block of $\pi_2$. We say $\pi_1$ \emph{refines} $\pi_2$ if $\pi_1 \preceq \pi_2$. 
%Then $(\rmPi_Q, \preceq)$ forms a poset. 
The poset $(\rmPi_Q, \preceq)$ is a finite lattice: For any $\pi_1, \pi_2 \in \rmPi_Q$, the \emph{meet} $\pi_1 \meet \pi_2$ is the $\preceq$-largest partition that refines both $\pi_1$ and $\pi_2$, and the \emph{join} $\pi_1 \join \pi_2$ is the $\preceq$-smallest partition that is refined by both $\pi_1$ and $\pi_2$. From now on, we refer to the lattice $(\rmPi_Q, \preceq)$ simply  as $\rmPi_Q$.

A {\em transformation} of a set $Q$ is a mapping of $Q$ into itself. We consider only transformations $t$ of a finite set $Q$. If  $i \in Q$, then $it$ is the {\it image} of $i$ under $t$.  If $X$ is a subset of $Q$, then $Xt = \{it \mid i \in X\}$, and the {\em restriction} of $t$ to $X$, denoted by $t|_X$, is a mapping from $X$ to $Xt$ such that $it|_X = it$ for all $i \in X$. The {\em composition} of  transformations $t_1$ and $t_2$ of $Q$ is a transformation $t_1 \circ t_2$ such that $i (t_1 \circ t_2) = (i t_1) t_2$ for all $i \in Q$. We~usually drop the  operator ``$\circ$'' and write $t_1t_2$ for short. 
An arbitrary transformation can be written in the form
\begin{equation*}\label{eq:transmatrix}
t=\left( \begin{array}{ccccc}
1 & 2 &   \cdots &  n-1 & n \\
i_1 & i_2 &   \cdots &  i_{n-1} & i_n
\end{array} \right ),
\end{equation*}
where $i_k = kt$,  $1\le k\le n$, and $i_k\in Q$. We also use the notation $t = [i_1,i_2,\ldots,i_n]$ for $t$ above. The {\em domain} $\tdom(t)$ of $t$ is $Q.$
The {\em range} $\trng(t)$ of $t$ is the set $\trng(t) = Q t.$ The \emph{rank} $\trank(t)$ of $t$ is the cardinality of $\trng(t)$, \ie, $\trank(t) = |\trng(t)|$. The binary relation $\omega_t$ on $Q \times Q$ is defined as follows: For any $i, j \in Q$, $i \mathbin{\omega_t} j$ if and only if $it^k = jt^l$ for some $k, l \ge 0$. This is an equivalence relation, and each equivalence class is called an \emph{orbit} of $t$. For any $i \in Q$, the orbit of $t$ containing $i$ is denoted by $\omega_t(i)$. The~set of all orbits of $t$ is denoted by $\Orbit(t)$. Clearly, $\Orbit(t)$ is a partition of $Q$. 

A \emph{permutation} of $Q$ is a mapping of $Q$ \emph{onto} itself, so here $\trng(\pi) = Q$. 
The \emph{identity} transformation $\tid_Q$ maps each element to itself. 
%that is, $j\tid_Q=j$ for $j=1,\ldots,n$.
A transformation $t$ is a \emph{cycle} of length $k$, where $k \ge 2$, if there exist pairwise different elements $i_1,\ldots,i_k$ such that
$i_1t=i_2, i_2t=i_3,\ldots, i_{k-1}t=i_k$, and $i_kt=i_1$, and the remaining elements are mapped to themselves.
A~cycle is denoted by $(i_1,i_2,\ldots,i_k)$.
For $i<j$, a \emph{transposition} is the cycle $(i,j)$.
A~\emph{singular} transformation, denoted by $i\choose j$, has $it=j$ and $ht=h$ for all $h\neq i$.
A~\emph{constant} transformation,  denoted by $Q \choose j$, has $it=j$ for all $i$.
A transformation $t$  is an \emph{idempotent} if $t^2 = t$.
The set $\cT_Q$ of all transformations of $Q$ is a finite semigroup, in fact, a monoid. We refer the reader to the book of Ganyushkin and Mazorchuk~\cite{GaMa09} for a detailed discussion of finite transformation semigroups. 

\medskip

For background about regular languages, we refer the reader to~\cite{Yu97}. Let $\Sig$ be a non-empty finite alphabet. Then $\Sig^*$ is the free monoid generated by $\Sig$, and $\Sig^+$ is the free semigroup generated by $\Sig$. A \emph{word} is any element of $\Sig^*$, and the empty word is $\eps$. The length of a word $w\in \Sig^*$ is $|w|$. A \emph{language} over $\Sig$ is any subset of $\Sig^*$. The \emph{reverse of a word} $w$ is denoted by $w^R$. For a language $L$, its \emph{reverse} is $L^R = \{w \mid w^R \in L\}$. The \emph{left quotient}, or simply \emph{quotient}, of a language $L$ by a word $w$ is   $L_w=\{x\in \Sig^*\mid wx\in L \}$. 

The \emph{Myhill congruence}~\cite{Myh57} $\lraL$ of any language $L$ is defined as follows:
\begin{equation*}
x~\lraL~y \mbox{ if and only if } uxv\in L  \Leftrightarrow uyv\in L\mbox { for all } u,v\in\Sig^*.
\end{equation*}
This congruence is also known as the \emph{syntactic congruence} of $L$. The quotient set $\Sig^+/ \lraL$ of equivalence classes of the relation $\lraL$ is a semigroup called the \emph{syntactic semigroup} of $L$, and $\Sig^*/ \lraL$ is the \emph{syntactic monoid} of~$L$. 
The \emph{syntactic complexity} $\sig(L)$ of $L$ is the cardinality of its syntactic semigroup.
A language is regular if and only if its syntactic semigroup is finite. We consider only regular languages, and so assume that all syntactic semigroups and  monoids are finite.

A DFA is denoted by $\cA = (Q, \Sig, \delta, q_1, F)$, as usual. The DFA $\cA$ accepts a word $w \in \Sigma^*$ if ${\delta}(q_1,w)\in F$. The language accepted by $\cA$ is denoted by $L(\cA)$. If $q$ is a state of $\cA$, then the language $L_q$ of $q$ is the language accepted by the DFA $(Q,\Sigma,\delta,q,F)$. Two states $p$ and $q$ of $\cA$ are \emph{equivalent} if $L_p = L_q$. If $L \subseteq \Sig^*$ is a regular language, then its \emph{quotient DFA} is $\cA = (Q, \Sig, \delta, q_1, F)$, where $Q=\{L_w \mid w\in\Sig^*\}$, $\delta(L_w,a)=L_{wa}$, $q_1=L_\eps=L$,  $F=\{L_w \mid \eps \in L_w\}$. The \emph{quotient complexity} $\kappa(L)$ of $L$ is the number of distinct quotients of $L$. The quotient DFA of $L$ is the minimal DFA accepting $L$, and so quotient complexity is the same as state complexity.

If $\cA = (Q, \Sig, \delta, q_1, F)$ is a DFA, then its \emph{transition semigroup}~\cite{Pin97}, denoted by $T_{\cA}$, consists of all transformations $t_w$ on $Q$ performed by non-empty words $w \in \Sig^+$ such that $it_w = \delta(i, w)$ for all $i \in Q$. The syntactic semigroup $T_L$ of a regular language $L$ is isomorphic to the transition semigroup of the quotient DFA $\cA$ of $L$~\cite{McNP71}, and we represent elements of $T_L$ by transformations in $T_{\cA}$. 
Given a set $G = \{t_a \mid a \in \Sig\}$ of transformations of $Q$, we can define the transition function $\delta$ of some DFA $\cA$ such that $\delta(i, a) = it_a$ for all $i \in Q$. The transition semigroup of such a DFA is the semigroup generated by $G$. When the context is clear, we write $a = t$,  to mean that the transformation performed by $a \in \Sig$ is~$t$.

\section{$\gR$-Trivial Regular Languages}\label{sec:Rtrivial} 

Given DFA $\cA = (Q, \Sig, \delta, q_1, F)$, we  define the \emph{reachability relation} $\ra$ as follows. For all $p, q \in Q$, $p \mathbin{\ra} q$ if and only if $\delta(p, w) = q$ for some $w \in \Sig^*$. We~say that $\cA$ is \emph{partially ordered}~\cite{BrFi80} if the relation $\ra$ is a partial order on $Q$. 

Consider the natural order $<$ on $Q$. A transformation $t$ of $Q$ is \emph{non-decreasing} if $p \le pt$ for all $p \in Q$. The set $\cF_Q$ of all non-decreasing transformations of $Q$ is a semigroup, since the composition of two non-decreasing transformations is again non-decreasing. It was shown in~\cite{BrFi80} that a language $L$ is $\gR$-trivial if and only if its quotient DFA is partially ordered. Equivalently, $L$ is an $\gR$-trivial language if and only if its syntactic semigroup contains only non-decreasing transformations.

 It is known~\cite{GaMa09} that  $\cF_Q$ is generated by the following set 
$$\cG\cF_Q = \{\tid_Q\} \cup \{t \in \cF_Q \mid t^2 = t \txt{and} \trank(t) = n-1\}.$$
For any transformation $t$ of $Q$, let $\Fix(t) = \{i \in Q \mid it = i\}$. Then

\begin{lemma}\label{lem:fixrank} 
For any $t \in \cG\cF_Q$, $\trng(t) = \Fix(t)$. 
\end{lemma}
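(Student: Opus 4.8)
The plan is to unpack the definition of $\cG\cF_Q$ and observe that the claim is trivially true for the identity, so the real content is the case where $t$ is a non-decreasing idempotent of rank $n-1$. First I would note that in general, for any idempotent $t$ we have $\trng(t) \subseteq \Fix(t)$: if $j \in \trng(t)$, then $j = it$ for some $i$, and since $t^2 = t$ we get $jt = (it)t = it^2 = it = j$, so $j \in \Fix(t)$. Conversely $\Fix(t) \subseteq \trng(t)$ always, since $j \in \Fix(t)$ means $j = jt \in \trng(t)$. Hence $\trng(t) = \Fix(t)$ holds for \emph{every} idempotent $t$, with no need for the rank or non-decreasing hypotheses at all.

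Given that, the proof is essentially two lines: handle $t = \tid_Q$ separately (where $\trng(t) = Q = \Fix(t)$), and for the remaining generators invoke the idempotent argument above. I would write it as: let $t \in \cG\cF_Q$. If $t = \tid_Q$ the claim is immediate. Otherwise $t^2 = t$, so for any $j \in \trng(t)$, writing $j = it$, we have $jt = it^2 = it = j$, giving $\trng(t) \subseteq \Fix(t)$; and $\Fix(t) \subseteq \trng(t)$ is trivial. Therefore $\trng(t) = \Fix(t)$.

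I do not anticipate any real obstacle here — the lemma is a bookkeeping fact, and the only thing to be careful about is not overlooking the $\tid_Q$ case, which is the one element of $\cG\cF_Q$ that is not of rank $n-1$ but still satisfies $t^2 = t$ (so in fact even that case is covered by the idempotent argument, and the split is just cosmetic). The statement is presumably recorded because later arguments want to freely identify the range of such a generator with its fixed-point set when reasoning about which blocks or states are moved.
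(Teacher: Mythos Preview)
Your proof is correct and follows essentially the same approach as the paper: both establish $\Fix(t) \subseteq \trng(t)$ trivially and use the idempotent property $t^2 = t$ to obtain $\trng(t) \subseteq \Fix(t)$. The only cosmetic difference is that the paper phrases the second inclusion by contradiction rather than directly, and you explicitly note (correctly) that neither the rank-$(n-1)$ nor the non-decreasing hypothesis is actually used.
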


\begin{proof} 
Pick arbitrary $t \in \cG\cF_Q$. The claim holds trivially for $\tid_Q$. Assume $t \neq \tid_Q$. Clearly $\Fix(t) \subseteq \trng(t)$. Suppose there exists $i \in \trng(t)$ but $it \neq i$. Then $jt = i$ for some $j \in Q$, and $j \neq i$. However, since $jt^2 = it \neq i = jt$, $t$ is not an idempotent, which is a contradiction. Therefore $\trng(t) = \Fix(t)$. \qed
\end{proof}

If $n = 1$, then $\cF_Q$ contains only the identity transformation $\tid_Q$, and $\cG\cF_Q = \cF_Q = \{\tid_Q\}$. So $|\cG\cF_Q| = |\cF_Q| = 1$. If $n \ge 2$, then we have

\begin{lemma}\label{lem:carGF}
For $n \ge 2$, $|\cG\cF_Q| = 1 + C^n_2$. 
\end{lemma}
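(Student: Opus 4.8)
The plan is to count directly the set $\cG\cF_Q = \{\tid_Q\} \cup \{t \in \cF_Q \mid t^2 = t, \trank(t) = n-1\}$. The identity contributes $1$, so the task reduces to counting the non-decreasing idempotents of rank $n-1$. By Lemma~\ref{lem:fixrank}, such a $t$ satisfies $\trng(t) = \Fix(t)$, so $t$ fixes exactly $n-1$ points and moves exactly one point, say $j$, to some $i \neq j$ with $i \in \Fix(t)$. Since $t$ is non-decreasing, $j < jt = i$. Conversely, given any pair $j < i$ in $Q$, the transformation that sends $j$ to $i$ and fixes everything else is non-decreasing, has rank $n-1$, and is idempotent (because $i$ is fixed). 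Hence the non-decreasing idempotents of rank $n-1$ are exactly the singular transformations $j \choose i$ with $j < i$, and there are $\binom{n}{2} = C^n_2$ such pairs.

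So the main steps, in order, are: (i) observe $|\cG\cF_Q| = 1 + |\{t \in \cF_Q : t^2 = t, \trank(t) = n-1\}|$; (ii) use Lemma~\ref{lem:fixrank} to show every such $t$ moves exactly one element $j$ to an element $i$ that it fixes, and that $j < i$ by the non-decreasing property; (iii) check the converse, that every $j \choose i$ with $j < i$ is such a transformation, in particular verifying idempotency from the fact that $i$ is a fixed point; (iv) conclude the count is $\binom{n}{2}$.

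I do not expect any serious obstacle here: the work is entirely in pinning down the bijection between the rank-$(n-1)$ non-decreasing idempotents and the set of pairs $\{(j,i) : 1 \le j < i \le n\}$. The one point that requires a moment's care is the idempotency check in the converse direction — one must confirm that $j \choose i$ squares to itself, which holds precisely because $it = i$, so $jt^2 = it = i = jt$ and all other points are fixed. Everything else is immediate from the definitions and from Lemma~\ref{lem:fixrank}.
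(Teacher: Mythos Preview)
Your proof is correct and follows essentially the same approach as the paper's: both use Lemma~\ref{lem:fixrank} to conclude that a non-identity element of $\cG\cF_Q$ moves exactly one point $j$ to a strictly larger fixed point $i$, yielding a bijection with the $C^n_2$ pairs $j<i$. Your write-up is slightly more thorough in that you explicitly verify the converse (that each ${j \choose i}$ with $j<i$ really is a non-decreasing idempotent of rank $n-1$), whereas the paper leaves this implicit.
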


\begin{proof} 
Pick $t \in \cG\cF_Q$ such that $t \neq \tid_Q$. Then $\trank(t) = n-1$, and, by Lemma~\ref{lem:fixrank}, $|\Fix(t)| = n-1$. There is only one element $i \in Q \setminus \Fix(t)$, and $i < it$. Note that $t$ is fully determined by the pair $(i, it)$. Hence there are $C^n_2$ different $t$. Together with the identity $\tid_Q$, the cardinality of $\cG\cF_Q$ is $1 + C^n_2$. \qed
\end{proof}

\begin{lemma}\label{lem:minGF} 
If $G \subseteq \cF_Q$ and $G$ generates $\cF_Q$, then $\cG\cF_Q \subseteq G$. 
\end{lemma}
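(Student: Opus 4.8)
The plan is to show that \emph{every} element of $\cG\cF_Q$ is forced to lie in any generating set $G$ of $\cF_Q$. The elements of $\cG\cF_Q$ are $\tid_Q$ together with the non-decreasing idempotents of rank $n-1$, and I would treat these two kinds separately: $\tid_Q$ by a rank argument, and each non-identity generator by showing it is ``almost indecomposable'' as a product. (When $n=1$ there is nothing to prove, since $\cG\cF_Q = \cF_Q = \{\tid_Q\}$, so I would assume $n \ge 2$.) The reason one cannot simply invoke the slogan ``indecomposable elements belong to every generating set'' is that these generators are idempotents, so $t = t\cdot t$ and they are literally decomposable; the correct substitute is a statement about the \emph{first factor} of any factorization, explained below.

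\textbf{The identity.} Since $G$ generates $\cF_Q$, I would write $\tid_Q = s_1 s_2 \cdots s_k$ with each $s_m \in G$. Because $\trng(s_1 s_2 \cdots s_k) = \trng(s_1)(s_2\cdots s_k)$ and applying a transformation cannot increase cardinality, $\trank(s_1) \ge \trank(\tid_Q) = n$, so $s_1$ is a permutation of $Q$. A non-decreasing permutation must be $\tid_Q$ (summing $p \le p s_1$ over all $p \in Q$ forces equality everywhere), hence $s_1 = \tid_Q \in G$.

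\textbf{The non-identity generators.} Fix $t \in \cG\cF_Q$ with $t \neq \tid_Q$. By Lemma~\ref{lem:fixrank} and the analysis in the proof of Lemma~\ref{lem:carGF}, $t$ has a unique non-fixed point $i$, with $it = j$ for some $j > i$; that is, $t = {i \choose j}$ and $\Fix(t) = \trng(t) = Q \setminus \{i\}$. The key step is to prove: \emph{whenever $t = t_1 t_2$ with $t_1, t_2 \in \cF_Q$, necessarily $t_1 \in \{\tid_Q, t\}$.} For every $p \neq i$ we have $p = pt = p t_1 t_2 \ge p t_1 \ge p$ (both factors non-decreasing), so $p t_1 = p$ and then $p t_2 = p$; thus $Q\setminus\{i\} \subseteq \Fix(t_1) \cap \Fix(t_2)$. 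Now set $a = i t_1 \ge i$: if $a = i$ then $t_1 = \tid_Q$; if $a \neq i$ then $a \in Q\setminus\{i\} \subseteq \Fix(t_2)$, so $j = it = it_1 t_2 = a t_2 = a$, and since $t_1$ fixes $Q\setminus\{i\}$ and sends $i$ to $j$, $t_1 = {i \choose j} = t$. To finish, write $t = s_1 s_2 \cdots s_k$ with $s_m \in G$ and $k$ minimal. If $k \ge 2$, apply the key step with $t_1 = s_1$, $t_2 = s_2 \cdots s_k$: the case $s_1 = \tid_Q$ would let us delete $s_1$ and contradict minimality, while $s_1 = t$ gives $t = s_1 \in G$; if $k = 1$ then $t = s_1 \in G$ directly. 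In all cases $t \in G$, so combined with $\tid_Q \in G$ we get $\cG\cF_Q \subseteq G$.

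The main obstacle, as noted, is conceptual rather than computational: one must resist the temptation to prove $t$ is indecomposable (it is not) and instead isolate exactly the weaker ``first factor is $\tid_Q$ or $t$'' property, which is precisely what the minimal-length reduction needs. The remaining care is routine: checking the rank/range behaviour of compositions, confirming that the only non-decreasing permutation is $\tid_Q$, and invoking the structural description of $\cG\cF_Q$ from Lemmas~\ref{lem:fixrank} and~\ref{lem:carGF} to pin down $t = {i \choose j}$.
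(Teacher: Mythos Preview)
Your proof is correct and follows essentially the same approach as the paper: both arguments show that in any factorization of $t \in \cG\cF_Q$ over $\cF_Q$ every factor must fix $\Fix(t)$, forcing the first (non-identity) factor to equal $t$ itself. The only difference is packaging---you isolate a two-factor ``key step'' and combine it with a minimal-length argument, whereas the paper argues directly with a $k$-fold product and derives a contradiction---but the underlying idea is identical.
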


\begin{proof}
Suppose there exists $t \in \cG\cF_Q$ such that $t \not\in G$. Since $G$ generates $\cF_Q$, $t$ can be written as $t = g_1 \cdots g_k$ for some $g_1,\ldots,g_k \in G$, where $k \ge 2$. Then $\trng(g_1) \supseteq \trng(g_1g_2) \supseteq \cdots \supseteq \trng(g_1g_2 \cdots g_k) = \trng(t)$. Note that $\tid_Q$ is the only element in $\cF_Q$ with range $Q$; so if $t = \tid_Q$, then $g_1 = \cdots = g_k = \tid_Q$, a~contradiction. 

Assume $t \neq \tid_Q$, and $g_i \neq \tid_Q$ for all $i=1,\ldots,k$. Then $\trank(t) = n-1$, and $\trng(g_1) = \cdots = \trng(g_k) = \trng(t)$. Since each $g_i$ is non-decreasing, for all $p \in \Fix(t)$, we must have $p \in \Fix(g_i)$ as well; so $\Fix(t) \subseteq \Fix(g_i)$. Moreover, since $\Fix(g_i) \subseteq \trng(g_i) = \trng(t)$ and $\trng(t) = \Fix(t)$ by Lemma~\ref{lem:fixrank}, $\Fix(g_i) = \Fix(t) = \trng(t)$. Now, let $q$ be the unique element in $Q \setminus \Fix(t)$. Then $q \not\in \Fix(g_1)$, and $qg_1 \in \Fix(g_2) = \cdots = \Fix(g_k)$. So~$q(g_1 \cdots g_k) = qg_1$. However, since $t = g_1 \cdots g_k$, $q(g_1 \cdots g_k) = qt$ and $qg_1 = qt$. Hence $g_1 = t$, and we get a contradiction again. Therefore $\cG\cF_Q \subseteq G$. \qed
\end{proof}

Consequently, $\cG\cF_Q$ is the unique minimal generator of $\cF_Q$. So we obtain  

\begin{theorem}\label{thm:Rtrivial}
If $L \subseteq \Sig^*$ is an $\gR$-trivial regular language of quotient complexity $\kappa(L) = n \ge 1$, then its syntactic complexity $\sigma(L)$ satisfies $\sigma(L) \le n!$, and this bound is tight if $|\Sig| = 1$ for $n = 1$ and $|\Sig| \ge 1 + C^n_2$ for $n \ge 2$.  
\end{theorem}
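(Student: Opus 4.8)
The plan is to prove the upper bound and the matching construction separately. For the upper bound, I would use that the syntactic semigroup $T_L$ of $L$ is isomorphic to the transition semigroup of the quotient DFA of $L$, which has exactly $n$ states, and that by the characterization recalled above an $\gR$-trivial language has a syntactic semigroup consisting only of non-decreasing transformations; hence $T_L \subseteq \cF_Q$ with $|Q| = n$. It then suffices to count $\cF_Q$: a non-decreasing transformation $t$ sends each $k \in Q$ to one of the $n - k + 1$ elements of $\{k, k+1, \ldots, n\}$, and these choices are mutually independent, so $|\cF_Q| = \prod_{k=1}^{n}(n - k + 1) = n!$, giving $\sigma(L) = |T_L| \le n!$.

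For tightness I would exhibit, for each admissible alphabet size, an $\gR$-trivial language of quotient complexity $n$ whose transition semigroup is all of $\cF_Q$. When $n = 1$ the language $\Sig^*$ over a one-letter alphabet does this trivially, since it has one quotient, is $\gR$-trivial, and has a one-element syntactic semigroup. For $n \ge 2$, put $Q = \{1, \ldots, n\}$ and define a DFA $\cA = (Q, \Sig, \delta, 1, \{n\})$ whose letters realize the generating set $\cG\cF_Q$: by Lemma~\ref{lem:fixrank} and the proof of Lemma~\ref{lem:carGF}, $\cG\cF_Q$ consists of $\tid_Q$ together with the $C^n_2$ idempotents ${i\choose j}$ with $1 \le i < j \le n$, so I assign one letter to $\tid_Q$, one letter to each ${i\choose j}$, and, if $|\Sig| > 1 + C^n_2$, every remaining letter to $\tid_Q$ as well. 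Then $\cG\cF_Q \subseteq \{t_a \mid a \in \Sig\} \subseteq \cF_Q$, so since $\cF_Q$ is generated by $\cG\cF_Q$ the transition semigroup of $\cA$ equals $\cF_Q$, which has $n!$ elements and contains only non-decreasing transformations; thus, once $\cA$ is shown to be minimal, its language $L$ satisfies $\kappa(L) = n$, is $\gR$-trivial, and has $\sigma(L) = |\cF_Q| = n!$.

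The step that needs a small argument is the minimality of $\cA$. Reachability is immediate, since $1$ is the start state and the letter realizing ${1\choose k}$ sends $1$ to $k$ for each $k \ge 2$. For distinguishability, the empty word separates the unique final state $n$ from every $i < n$, and for $i < j < n$ the letter realizing ${j\choose n}$ sends $j$ to the final state $n$ while fixing the non-final state $i$; hence all $n$ states are pairwise inequivalent and $\cA$ is the quotient DFA of $L$. I expect this distinguishability bookkeeping, together with the identification of $\cG\cF_Q$ with $\{\tid_Q\} \cup \{{i\choose j} \mid i < j\}$ via Lemma~\ref{lem:fixrank}, to be the only mildly technical points; everything else follows directly from the facts and lemmas already established.
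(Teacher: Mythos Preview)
Your proposal is correct and follows essentially the same approach as the paper: bound $\sigma(L)$ by $|\cF_Q|=n!$, then build a DFA on $Q=\{1,\ldots,n\}$ whose letters realize $\cG\cF_Q$ so that the transition semigroup is all of $\cF_Q$. The only notable difference is in the minimality argument: you exploit the explicit description $\cG\cF_Q=\{\tid_Q\}\cup\{{i\choose j}\mid i<j\}$ to reach state $k$ directly via ${1\choose k}$ and to separate $i<j<n$ via ${j\choose n}$, whereas the paper argues more indirectly by observing that the transformations $[p,n,\ldots,n]$ and $[2,3,\ldots,n,n]$ lie in $\cF_Q$ and hence are realized by some words over the generating alphabet; your version is a bit more direct but not a different route.
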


\begin{proof}
Let $\cA$ be the quotient DFA of $L$, and let $T_L$ be its syntactic semigroup. Then $T_L$ is a subset of $\cF_Q$. Pick an arbitrary $t \in \cF_Q$. For each $p \in Q$, since $p \le pt$, $pt$ can be chosen from $\{p, p+1, \ldots, n\}$. Hence there are exactly $n!$ transformations in $\cF_Q$, and $\sigma(L) \le n!$. 

When $n = 1$, the only regular languages are $\eps$ or $\emptyset$, and they are both  $\gR$-trivial. To see the bound is tight for $n \ge 2$, let $\cA_n = (Q, \Sig, \delta, 1, \{n\})$ be the DFA with alphabet $\Sig$ of size $1+C^n_2$ and set of states $Q = \{1,\ldots,n\}$, where each $a \in \Sig$ defines a distinct transformation in $\cG\cF_Q$. For each $p \in Q$, let $t_p = [p,n,\ldots,n]$. Since $\cG\cF_Q$ generates $\cF_Q$ and $t_p \in \cF_Q$, $t_p = e_1 \cdots e_k$ for some $e_1,\ldots,e_k \in \cG\cF_Q$, where $k$ depends on $p$. Then there exist $a_1,\ldots,a_k \in \Sig$ such that each $a_i$ performs $e_i$ and state $p$ is reached by $w = a_1 \cdots a_k$. Moreover, $n$ is the only final state of~$\cA_n$. Consider any non-final state $q \in Q \setminus \{n\}$. Since $t = [2,3,\ldots,n,n] \in \cF_Q$, there exist $b_1, \ldots, b_l \in \Sig$ such that the word $u = b_1 \cdots b_l$ performs $t$. State $q$ can be distinguished from other non-final states by the word $u^{n-q}$. Hence $L = L(\cA_n)$ has quotient complexity $\kappa(L) = n$. The syntactic monoid of $L$ is $\cF_Q$, and so $\sigma(L) = n!$. By Lemma~\ref{lem:minGF}, the alphabet of $\cA_n$ is minimal. \qed
\end{proof}

\begin{example}\label{ex:Rtrivial}
When $n = 4$, there are $4! = 24$ non-decreasing transformations of $Q = \{1,2,3,4\}$. Among them, there are 11 transformations with rank $n-1 = 3$. The following 6 transformations from the 11 are idempotents: 
\begin{align*}
  e_1 = [1, 2, 4, 4], & \qquad e_2 = [1, 3, 3, 4], \\
  e_3 = [1, 4, 3, 4], & \qquad e_4 = [2, 2, 3, 4], \\
  e_5 = [3, 2, 3, 4], & \qquad e_6 = [4, 2, 3, 4].
\end{align*}
Together with the identity transformation $\tid_Q$, we have the generating set $\cG\cF_Q$ for $\cF_Q$ with 7 transformations. We can then define the DFA $\cA_4$ with 7 inputs as in the proof of Theorem~\ref{thm:Rtrivial}; $\cA_4$ is shown in Fig.~\ref{fig:RTDFA}. The quotient complexity of $L = L(\cA_4)$ is $4$, and the syntactic complexity of $L$ is $24$. \qedb
\end{example}

\begin{figure}[hbt]
\begin{center}
\setlength{\unitlength}{0.00065617in}
\begingroup\makeatletter\ifx\SetFigFont\undefined%
\gdef\SetFigFont#1#2#3#4#5{%
  \reset@font\fontsize{#1}{#2pt}%
  \fontfamily{#3}\fontseries{#4}\fontshape{#5}%
  \selectfont}%
\fi\endgroup%
{\renewcommand{\dashlinestretch}{30}
\begin{picture}(4070,2116)(0,-10)
\put(1587.000,1110.250){\arc{337.500}{2.4981}{6.9267}}
\blacken\path(1725.575,1132.641)(1722.000,1009.000)(1783.339,1116.413)(1725.575,1132.641)
\put(1587,829){\ellipse{450}{450}}
\put(2712.000,1110.250){\arc{337.500}{2.4981}{6.9267}}
\blacken\path(2850.575,1132.641)(2847.000,1009.000)(2908.339,1116.413)(2850.575,1132.641)
\put(2712,829){\ellipse{450}{450}}
\put(3837.000,1110.250){\arc{337.500}{2.4981}{6.9267}}
\blacken\path(3975.575,1132.641)(3972.000,1009.000)(4033.339,1116.413)(3975.575,1132.641)
\put(3837,829){\ellipse{450}{450}}
\put(462.000,1110.250){\arc{337.500}{2.4981}{6.9267}}
\blacken\path(600.575,1132.641)(597.000,1009.000)(658.339,1116.413)(600.575,1132.641)
\put(2149.500,232.750){\arc{3270.249}{3.6052}{5.8195}}
\blacken\path(3527.789,1054.601)(3612.000,964.000)(3580.332,1083.571)(3527.789,1054.601)
\put(462,829){\ellipse{450}{450}}
\put(3837,829){\ellipse{372}{372}}
\path(687,829)(1362,829)
\blacken\path(1242.000,799.000)(1362.000,829.000)(1242.000,859.000)(1242.000,799.000)
\path(1812,829)(2487,829)
\blacken\path(2367.000,799.000)(2487.000,829.000)(2367.000,859.000)(2367.000,799.000)
\path(2937,829)(3612,829)
\blacken\path(3492.000,799.000)(3612.000,829.000)(3492.000,859.000)(3492.000,799.000)
\path(462,604)(462,379)(2712,379)(2712,604)
\blacken\path(2742.000,484.000)(2712.000,604.000)(2682.000,484.000)(2742.000,484.000)
\path(1587,604)(1587,244)(3837,244)(3837,604)
\blacken\path(3867.000,484.000)(3837.000,604.000)(3807.000,484.000)(3867.000,484.000)
\path(12,829)(237,829)
\blacken\path(117.000,799.000)(237.000,829.000)(117.000,859.000)(117.000,799.000)
\put(462,762){\makebox(0,0)[b]{\smash{{\SetFigFont{9}{10.8}{\familydefault}{\mddefault}{\updefault}$1$}}}}
\put(1587,762){\makebox(0,0)[b]{\smash{{\SetFigFont{9}{10.8}{\familydefault}{\mddefault}{\updefault}$2$}}}}
\put(2712,762){\makebox(0,0)[b]{\smash{{\SetFigFont{9}{10.8}{\familydefault}{\mddefault}{\updefault}$3$}}}}
\put(3837,762){\makebox(0,0)[b]{\smash{{\SetFigFont{9}{10.8}{\familydefault}{\mddefault}{\updefault}$4$}}}}
\put(2712,1369){\makebox(0,0)[b]{\smash{{\SetFigFont{8}{9.6}{\familydefault}{\mddefault}{\updefault}$e_2,\ldots,e_6$}}}}
\put(3837,1369){\makebox(0,0)[b]{\smash{{\SetFigFont{8}{9.6}{\familydefault}{\mddefault}{\updefault}$e_1,\ldots,e_6$}}}}
\put(462,1369){\makebox(0,0)[b]{\smash{{\SetFigFont{8}{9.6}{\familydefault}{\mddefault}{\updefault}$e_1,e_2,e_3$}}}}
\put(1587,1369){\makebox(0,0)[b]{\smash{{\SetFigFont{8}{9.6}{\familydefault}{\mddefault}{\updefault}$e_1,e_4,e_5,e_6$}}}}
\put(2262,1954){\makebox(0,0)[b]{\smash{{\SetFigFont{8}{9.6}{\familydefault}{\mddefault}{\updefault}$e_6$}}}}
\put(2712,64){\makebox(0,0)[b]{\smash{{\SetFigFont{8}{9.6}{\familydefault}{\mddefault}{\updefault}$e_3$}}}}
\put(1002,199){\makebox(0,0)[b]{\smash{{\SetFigFont{8}{9.6}{\familydefault}{\mddefault}{\updefault}$e_5$}}}}
\put(1025,874){\makebox(0,0)[b]{\smash{{\SetFigFont{8}{9.6}{\familydefault}{\mddefault}{\updefault}$e_4$}}}}
\put(2172,874){\makebox(0,0)[b]{\smash{{\SetFigFont{8}{9.6}{\familydefault}{\mddefault}{\updefault}$e_2$}}}}
\put(3297,874){\makebox(0,0)[b]{\smash{{\SetFigFont{8}{9.6}{\familydefault}{\mddefault}{\updefault}$e_1$}}}}
\end{picture}
}
\end{center}
\caption[DFA $\cA_4$ with $\kappa(L(\cA_4)) = 4$ and $\sigma(L(\cA_4)) = 24$]{DFA $\cA_4$ with $\kappa(L(\cA_4)) = 4$ and $\sigma(L(\cA_4)) = 24$; the input performing the identity transformation is not shown.}
\label{fig:RTDFA}
\end{figure}

\section{$\gJ$-Trivial Regular Languages}\label{sec:Jtrivial}

For any $m \ge 1$, we define an equivalence relation $\lra_m$ on $\Sig^*$ as follows. 
For any $u, v \in \Sig^*$, $u \mathbin{\lra_m} v$ if any only if for every $x \in \Sig^*$ with $|x| \le m$, 
$x$ is a subword of $u$ if and only if $x$ is a subword of $v$.
Let $L$ be any language over $\Sig$. Then $L$ is \emph{piecewise-testable} if there exists $m \ge 1$ such that, for every $u, v \in \Sig^*$, $u \mathbin{\lra_m} v$ implies that $u \in L \mathbin{\Lra} v \in L$. Let $\cA = (Q, \Sig, \delta, q_1, F)$ be a DFA. If $\Gamma$ is a subset of $\Sig$, a \emph{component} of $\cA$ restricted to $\Gamma$ is a minimal subset $P$ of $Q$ such that, for all $p \in Q$ and $w \in \Gamma^*$, $\delta(p, w) \in P$ if and only of $p \in P$. A state q of $\cA$ is \emph{maximal} if $\delta(q, a) = q$ for all $a \in \Sig$. Simon~\cite{Sim75} proved the following characterization of piecewise-testable languages. 

\begin{theorem}[Simon]\label{thm:simon}
Let $L$ be a regular language over $\Sig$, let $\cA$ be its quotient DFA, and let $T_L$ be its syntactic monoid. Then the following are equivalent:
\be
\item 
$L$ is piecewise-testable.
\item 
$\cA$ is partially ordered, and for every non-empty subset $\Gamma$ of $\Sig$, each component of $\cA$ restricted to $\Gamma$ has exactly one maximal state.
\item 
$T_L$ is $\gJ$-trivial. 
\ee
\end{theorem}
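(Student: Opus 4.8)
The plan is to prove the cycle $(1)\Rightarrow(3)\Rightarrow(2)\Rightarrow(1)$. The first two implications are essentially bookkeeping, translating between the combinatorics of subwords, the algebra of $T_L$, and the structure of $\cA$; the last implication is the substantive combinatorial core, originally due to Simon, and is where I expect nearly all of the difficulty to lie.

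For $(1)\Rightarrow(3)$: fix $m\ge 1$. The relation $\lra_m$ is a congruence of finite index on $\Sig^*$, since there are only finitely many sets of words of length at most $m$, and since a subword of length at most $m$ of a product $uv$ splits as a concatenation of a subword of $u$ and a subword of $v$, each of length at most $m$. The finite monoid $M_m=\Sig^*/\lra_m$ is $\gJ$-trivial: if $[u]\mathbin{\gJ}[v]$ in $M_m$ then $u$ and $v$ have the same subwords of length at most $m$ — any such subword of $v$ is a subword of every word of the form $xvy$, hence of $u$, and symmetrically — so $[u]=[v]$. If $L$ is piecewise-testable with parameter $m$, then $L$ is a union of $\lra_m$-classes, so $\lra_m$ refines the syntactic congruence $\lraL$, whence $T_L$ is a homomorphic image of $M_m$; since homomorphic images of finite $\gJ$-trivial monoids are $\gJ$-trivial, $T_L$ is $\gJ$-trivial.

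For $(3)\Rightarrow(2)$: $\gJ$-triviality implies $\gR$-triviality, so $\cA$ is partially ordered by the theorem of Brzozowski and Fich quoted above. Fix a non-empty $\Gamma\subseteq\Sig$ and a component $P$ of $\cA$ restricted to $\Gamma$. Since $\cA$ is partially ordered, a state fixed by some non-empty word over $\Gamma$ is fixed by every letter of $\Gamma$ — a nontrivial move on a path that returns to its origin would create a cycle in $\ra$ — so the maximal states of $P$ are exactly the $\ra$-maximal elements of $P$, and every state of $P$ reaches one of them. If $P$ had two distinct maximal states, they would be inequivalent states of the minimal DFA $\cA$, and I would extract from this two idempotents of $T_L$ that are $\gJ$-equivalent but unequal, contradicting $\gJ$-triviality. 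Getting this last extraction right — tying the reachability order inside $P$ to the $\gJ$-order of $T_L$ — is the delicate point of this implication.

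For $(2)\Rightarrow(1)$, the heart of the theorem: I claim $L$ is saturated by $\lra_m$, where $m$ is the length of a longest strictly increasing $\ra$-chain in $\cA$ (so $m\le|Q|$). The engine is the lemma that $u\mathbin{\lra_m}v$ implies $\delta(q_1,u)$ and $\delta(q_1,v)$ are equivalent states, proved by induction on $m$ together with the alphabet size. In the inductive step, let $\Gamma$ be the set of letters occurring in $u$ — the same as for $v$, since their length-$1$ subwords agree — and let $\widehat q$ be the unique maximal state of the $\Gamma$-component of $q_1$. One peels off from $u$ and from $v$ their longest prefixes that remain inside that component; uniqueness of $\widehat q$ forces both computations to arrive at $\widehat q$, and the remaining suffixes are shorter, are related by $\lra_{m-1}$, and use a strictly smaller alphabet, so the induction hypothesis applies, and iterating consumes $u$ and $v$ in at most $m$ rounds. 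This matching-up argument — synchronising the two runs while tracking which subwords have already been consumed — is precisely Simon's original combinatorial insight, and is the step I expect to be genuinely hard; a known alternative is to route through the structure theorem that every finite $\gJ$-trivial monoid divides a finite direct product of copies of the two-element monoid $U_1=\{1,0\}$ and then verify directly that languages recognised by such products are piecewise-testable. Together with $(1)\Rightarrow(3)$, which already supplies the converse implications, this closes the cycle.
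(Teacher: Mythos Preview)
The paper does not actually prove Theorem~\ref{thm:simon}: it is stated as a known characterization due to Simon, with a citation to~\cite{Sim75}, and is used as a black box throughout the rest of Section~\ref{sec:Jtrivial}. So there is no proof in the paper to compare your proposal against.

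That said, your outline follows the standard route to Simon's theorem and is broadly correct as a \emph{plan}. A few remarks. Your argument for $(1)\Rightarrow(3)$ is essentially complete and correct: the key points --- that $\lra_m$ is a congruence, that $\Sig^*/\!\lra_m$ is $\gJ$-trivial because $\gJ$-related classes must share the same subwords of length $\le m$, and that $\gJ$-triviality passes to quotients --- are all there. For $(3)\Rightarrow(2)$ you correctly identify the delicate step but do not carry it out; the usual way is to pick a word $w$ whose content is exactly $\Gamma$, take an idempotent power $e$ of $t_w$ in $T_L$, and show that two distinct maximal states in one $\Gamma$-component would force $e$ to be $\gJ$-related to a distinct idempotent obtained from a word that first drives the computation to the other maximal state. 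For $(2)\Rightarrow(1)$ your sketch matches Simon's original combinatorial argument, but as you yourself note, the synchronisation lemma is the real content and you have only stated its shape; the alternative you mention, via the structure of finite $\gJ$-trivial monoids, is also standard but would normally be phrased as ``every finite $\gJ$-trivial monoid divides a product of monoids of the form $(\{1,\ldots,k\},\min)$'' rather than copies of $U_1$.

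In short: the proposal is a faithful outline of the classical proof, but since the paper treats the theorem as a cited result, no comparison with a paper-proof is possible, and your write-up as it stands is a plan rather than a proof --- the two places you flag as delicate are exactly where the work remains to be done.
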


Consequently, a regular language is piecewise-testable if and only if it is $\gJ$-trivial. The following characterization of $\gJ$-trivial monoids is due to Saito~\cite{Sai98}. 

\begin{theorem}[Saito]\label{thm:saito}
Let $S$ be a monoid of transformations of $Q$. Then the following are equivalent:
\be
\item 
$S$ is $\gJ$-trivial.
\item 
$S$ is a subset of $\cF_Q$ and $\Orbit(ts) = \Orbit(t) \join \Orbit(s)$ for all $t, s \in S$. 
\ee
\end{theorem}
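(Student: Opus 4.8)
The plan is to prove Theorem~\ref{thm:saito} (Saito's characterization) in two directions, using the orbit machinery set up in Section~\ref{sec:pre} together with Simon's Theorem~\ref{thm:simon}. First I would record a few elementary facts about orbits of non-decreasing transformations: if $t \in \cF_Q$, then each orbit $\omega_t(i)$ has a unique maximal element which is a fixed point of $t$ (since iterating a non-decreasing transformation on a finite set eventually stabilizes at a fixed point, and all elements of an orbit reach the same one); consequently $\Fix(t) = \Max(\Orbit(t))$, and $\Orbit(t)$ is completely determined by the data of which elements share a fixed point. I would also note the monotonicity $\Orbit(t) \preceq \Orbit(ts)$: if $i \mathbin{\omega_t} j$ then $it^k = jt^l$ for some $k,l$, and applying powers of $ts$ — observing $i(ts)^N$ eventually lands on a common fixed point of $ts$ reachable from $it$ — gives $i \mathbin{\omega_{ts}} j$; symmetrically, reading $ts$ from the right, $\Orbit(s) \preceq \Orbit(ts)$ because $is = (ts)$-image after one application, so elements in one $s$-orbit end up in one $ts$-orbit. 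Hence $\Orbit(t) \join \Orbit(s) \preceq \Orbit(ts)$ always holds inside $\cF_Q$, and the content of the theorem is the reverse refinement under $\gJ$-triviality, plus the membership $S \subseteq \cF_Q$.

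For the direction (1)$\Rightarrow$(2): if $S$ is $\gJ$-trivial it is in particular $\gR$-trivial, so by the result of~\cite{BrFi80} quoted in Section~\ref{sec:Rtrivial} its transformations are all non-decreasing, giving $S \subseteq \cF_Q$. It remains to show $\Orbit(ts) \preceq \Orbit(t) \join \Orbit(s)$. Here I would invoke the component characterization (2) in Simon's theorem: fix $t,s \in S$ and consider the two-letter sub-semiautomaton on $\Gamma$ whose generators act as $t$ and $s$; $\gJ$-triviality of the whole monoid forces $\gJ$-triviality of this sub-structure, so each $\Gamma$-component has a unique maximal state. The $\Gamma$-components are exactly the blocks of $\Orbit(t) \join \Orbit(s)$ (the smallest partition whose blocks are closed under both $t$ and $s$). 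Now if two elements $i,j$ lie in the same $ts$-orbit but in different blocks $B_1 \ne B_2$ of $\Orbit(t)\join\Orbit(s)$, then $i(ts)^k = j(ts)^l$ lies in $B_1 \cap B_2$ — impossible since the blocks partition $Q$ and are closed under $t$ and $s$ hence under $ts$. Therefore $\Orbit(ts) \preceq \Orbit(t)\join\Orbit(s)$, and combined with the always-true reverse refinement we get equality.

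For (2)$\Rightarrow$(1): assume $S \subseteq \cF_Q$ and $\Orbit(ts) = \Orbit(t)\join\Orbit(s)$ for all $t,s \in S$. Since $S \subseteq \cF_Q$ the semiautomaton is partially ordered, so by Simon's theorem it suffices to verify that every $\Gamma$-component has a unique maximal state for each nonempty $\Gamma \subseteq \Sig$. A $\Gamma$-component is a minimal set $P$ closed under all generators in $\Gamma$; iterating, it is a block of $\Orbit(w)$ where $w$ is any word using every letter of $\Gamma$ sufficiently often, and by the hypothesis (applied inductively, $\Orbit(t_1\cdots t_r) = \Orbit(t_1)\join\cdots\join\Orbit(t_r)$) this equals a block of $\bigvee_{a\in\Gamma}\Orbit(t_a)$. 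Each such block, being an orbit of the non-decreasing transformation $t_w$, has exactly one maximal element by the first paragraph's observation, and that element is the unique maximal state of the component (it is fixed by $t_w$, hence by every generator in $\Gamma$, since a non-decreasing transformation fixing the top of a block cannot move it and nothing in the block exceeds it). Thus condition (2) of Simon holds, $T_L$-style reasoning applies to $S$ itself, and $S$ is $\gJ$-trivial.

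The main obstacle I anticipate is the precise bookkeeping that identifies $\Gamma$-components with blocks of the join of the orbit partitions, and the inductive step $\Orbit(t_1 \cdots t_r) = \bigvee_i \Orbit(t_i)$ — this needs the join hypothesis to be genuinely associative and to interact correctly with the closure definition of a component, and one must be careful that "a word using every letter enough times" really does generate the full component rather than a proper closed subset. A secondary delicate point is the claim that a non-decreasing transformation's orbit has a \emph{unique} maximum and that this maximum is a common fixed point; this is where finiteness and the linear order on $Q$ are essential, and it should be stated and proved as a small preliminary lemma before the main argument.
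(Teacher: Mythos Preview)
The paper does not prove Theorem~\ref{thm:saito}; it is stated as a result of Saito~\cite{Sai98} and used as a tool throughout Section~\ref{sec:Jtrivial}, so there is no proof in the paper to compare your attempt against.

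On the substance of your attempt: the step ``$S$ is $\gJ$-trivial, hence $\gR$-trivial, hence $S \subseteq \cF_Q$ by Brzozowski--Fich'' does not go through for the \emph{given} natural order on $Q$. Take $Q=\{1,2\}$ and $t=[1,1]$; then $\{\tid_Q,t\}$ is $\gJ$-trivial (it has two singleton $\gJ$-classes) but $t\notin\cF_Q$ since $2t=1<2$. What the Brzozowski--Fich result actually gives is that the reachability relation of an $\gR$-trivial semiautomaton is a partial order, so \emph{some} linear extension of it makes every transformation non-decreasing --- but not necessarily the order $Q$ already carries. Thus the equivalence, read literally, only holds after a suitable relabeling of $Q$; this is harmless for the paper's applications (where $Q$ is always the state set of a quotient DFA labeled compatibly) and harmless for the orbit-join condition (which is order-independent), but your proof of (1)$\Rightarrow$(2) must insert this relabeling explicitly rather than invoke~\cite{BrFi80} as if it fixed the order for you. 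Your argument for (2)$\Rightarrow$(1) via Simon's component criterion is essentially sound; just note that Theorem~\ref{thm:simon} as stated concerns the quotient DFA of a language, so to apply it to an abstract transformation monoid you must first realize $S$ as the transition monoid of a minimal DFA --- which is precisely what the paper arranges in Lemma~\ref{lem:trmax}.
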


\vspace{12pt}
Let $L$ be a  $\gJ$-trivial language with quotient DFA $\cA = (Q, \Sig, \delta, q_1, F)$ and syntactic monoid $T_L$. Since $T_L\subseteq \cF_Q$, an upper bound on the cardinality of $\gJ$-trivial submonoids of~$\cF_Q$ is an upper bound on the syntactic complexity of~$L$. 

\begin{lemma}\label{lem:fixmax} 
If $t,s \in \cF_Q$, then
\be
\item 
$\Fix(t) = \Max(\Orbit(t))$.
\item 
$\Orbit(t) \preceq \Orbit(s)$ implies $\Fix(t) \supseteq \Fix(s)$, where $\Fix(t) = \Fix(s)$ if and only if $\Orbit(t) = \Orbit(s)$.
\ee 
\end{lemma}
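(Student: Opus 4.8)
The plan is to prove each part directly from the definitions, using the fact that for $t \in \cF_Q$ the transformation is non-decreasing, so iterating $t$ on any element yields a non-decreasing chain in the finite linearly-ordered set $Q$, which must therefore stabilize at a fixed point.

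\textbf{Part 1.} First I would show $\Max(\Orbit(t)) \subseteq \Fix(t)$. Take an orbit $X \in \Orbit(t)$ and let $m = \max(X)$. Since $mt \in X$ (the orbit is closed under $t$, as $m \mathbin{\omega_t} mt$) and $m$ is the largest element of $X$, we get $mt \le m$; combined with $m \le mt$ (non-decreasing), $mt = m$, so $m \in \Fix(t)$. Conversely, for $\Fix(t) \subseteq \Max(\Orbit(t))$: if $i \in \Fix(t)$, let $X = \omega_t(i)$ be its orbit and $m = \max(X)$. By what we just proved, $m \in \Fix(t)$. Now $i$ and $m$ lie in the same orbit, so $it^k = mt^l$ for some $k,l \ge 0$; since both are fixed, $i = it^k = mt^l = m$. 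Hence $i = m = \max(X) \in \Max(\Orbit(t))$. This gives equality.

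\textbf{Part 2.} Suppose $\Orbit(t) \preceq \Orbit(s)$, i.e., every orbit of $t$ is contained in some orbit of $s$. Let $i \in \Fix(s)$. By Part 1, $i = \max(Y)$ where $Y = \omega_s(i)$. Let $X = \omega_t(i)$; by the refinement hypothesis $X \subseteq Y$, so $\max(X) \le \max(Y) = i$. But $i \in X$ and, again by Part 1 applied to $t$, $\max(X) \in \Fix(t)$ with $i \le \max(X)$ (non-decreasing chain argument, or simply $i \le \max(X)$ since $i \in X$). Thus $\max(X) = i$, so $i \in \Fix(t)$. This shows $\Fix(s) \subseteq \Fix(t)$. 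For the equivalence: if $\Orbit(t) = \Orbit(s)$ then trivially $\Fix(t) = \Fix(s)$ by Part 1. Conversely, suppose $\Fix(t) = \Fix(s)$ (and we still have $\Orbit(t) \preceq \Orbit(s)$); I must show the partitions coincide. Each orbit $Y$ of $s$ has a unique maximal element $m = \max(Y) \in \Fix(s) = \Fix(t)$, and $Y$ is partitioned by those orbits of $t$ it contains; each such $t$-orbit has its own maximal element in $\Fix(t) = \Fix(s)$, which must also be a maximal element of its containing $s$-orbit $Y$ — but $Y$ has only one maximal element, $m$. Since distinct blocks of a partition have distinct elements, $Y$ can contain only one $t$-orbit, namely the one containing $m$, so $Y$ itself is a $t$-orbit. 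Hence $\Orbit(s) \preceq \Orbit(t)$, giving equality.

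The routine part is the chain-stabilization observation; the only mildly delicate point is the last implication of Part 2, where I need to argue that a single $s$-orbit cannot split into two or more $t$-orbits without creating two distinct elements of $\Fix(t)$ inside it. I expect this to be the main thing to get right, and the key is precisely that $\max$ of each sub-block lands in $\Fix(t) = \Fix(s) = \Max(\Orbit(s))$, forcing it to equal the unique $s$-orbit maximum. No result beyond Part 1 and the definition of $\preceq$ is needed.
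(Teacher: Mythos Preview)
Your proof is correct and follows essentially the same approach as the paper's: both parts rest on identifying $\Fix(t)$ with the set of block maxima and then using that, under refinement, equality of these maxima forces equality of the partitions. Your Part~2 is more explicit than the paper's (which simply asserts $\Max(\Orbit(t)) \supseteq \Max(\Orbit(s))$ ``by definition'' and that $\Orbit(t) = \Orbit(s) \Leftrightarrow \Max(\Orbit(t)) = \Max(\Orbit(s))$ without spelling out the splitting argument you give), but the underlying idea is identical.
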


\begin{proof} 
1. First, for each $j \in \Max(\Orbit(t))$, since $t \in \cF_Q$, we have $jt = j$, and $j \in \Fix(t)$. So $\Max(\Orbit(t)) \subseteq \Fix(t)$. On the other hand, if there exists $j \in \Fix(t) \setminus \Max(\Orbit(t))$, then $jt = j$, and $j < \max(\omega_t(j))$. Let $i = \max(\omega_t(j))$; then $it = i$ and, for any $k, l \ge 0$, $jt^k = j < i = it^l$. So $i \not\in \omega_t(j)$, which is a contradiction. Hence $\Fix(t) = \Max(\Orbit(t))$. 

2. Assume $\Orbit(t) \preceq \Orbit(s)$. By definition, we have $\Max(\Orbit(t)) \supseteq \Max(\Orbit(s))$. Then, by~1, $\Fix(t) \supseteq \Fix(s)$. Furthermore, $\Orbit(t) = \Orbit(s)$ if and only if $\Max(\Orbit(t)) = \Max(\Orbit(s))$, and if and only if $\Fix(t) = \Fix(s)$. \qed
\end{proof}

\begin{example}\label{ex:fixmax}
Consider non-decreasing $t = [1,3,3,5,6,6]$, as shown in Fig.~\ref{fig:fixmax}~(a). The orbit set $\Orbit(t)$ has three blocks: $\{1\}$, $\{2,3\}$, and $\{4,5,6\}$. Note that $\Fix(t) = \{1,3,6\} = \Max(\Orbit(t))$, as expected. 

Let $s = [4,3,3,6,6,6]$ be another non-decreasing transformation, as shown in Fig.~\ref{fig:fixmax}~(b). The orbit set $\Orbit(s)$ has two blocks: $\{1,4,5,6\}$ and $\{2,3\}$. Note that $\Orbit(t) \prec \Orbit(s)$ and $\Fix(t) \supset \Fix(s)$. \qedb

\begin{figure}[hbt]
\begin{center}
\setlength{\unitlength}{0.00065617in}
\begingroup\makeatletter\ifx\SetFigFont\undefined%
\gdef\SetFigFont#1#2#3#4#5{%
  \reset@font\fontsize{#1}{#2pt}%
  \fontfamily{#3}\fontseries{#4}\fontshape{#5}%
  \selectfont}%
\fi\endgroup%
{\renewcommand{\dashlinestretch}{30}
\begin{picture}(4113,1339)(0,-10)
\put(1577.500,2011.500){\arc{4005.253}{1.0921}{2.0495}}
\blacken\path(2404.842,154.974)(2500.000,234.000)(2378.849,209.052)(2404.842,154.974)
\put(1915.000,508.500){\arc{261.000}{2.3318}{7.0930}}
\blacken\path(2028.393,486.820)(2005.000,414.000)(2054.601,472.221)(2028.393,486.820)
\put(3962.500,508.300){\arc{261.402}{2.3356}{7.0891}}
\blacken\path(4076.335,486.839)(4053.000,414.000)(4102.555,472.261)(4076.335,486.839)
\put(565,324){\ellipse{226}{226}}
\put(1229,330){\ellipse{226}{226}}
\put(1915,324){\ellipse{226}{226}}
\put(2612,324){\ellipse{226}{226}}
\put(3287,324){\ellipse{226}{226}}
\put(3963,324){\ellipse{226}{226}}
\path(1330,324)(1780,324)
\blacken\path(1705.000,309.000)(1780.000,324.000)(1705.000,339.000)(1705.000,309.000)
\path(2725,324)(3175,324)
\blacken\path(3100.000,309.000)(3175.000,324.000)(3100.000,339.000)(3100.000,309.000)
\path(3400,324)(3850,324)
\blacken\path(3775.000,309.000)(3850.000,324.000)(3775.000,339.000)(3775.000,309.000)
\put(1915.000,1183.500){\arc{261.000}{2.3318}{7.0930}}
\blacken\path(2028.393,1161.820)(2005.000,1089.000)(2054.601,1147.221)(2028.393,1161.820)
\put(3962.500,1183.300){\arc{261.402}{2.3356}{7.0891}}
\blacken\path(4076.335,1161.839)(4053.000,1089.000)(4102.555,1147.261)(4076.335,1161.839)
\put(565.000,1183.500){\arc{261.000}{2.3318}{7.0930}}
\blacken\path(678.393,1161.820)(655.000,1089.000)(704.601,1147.221)(678.393,1161.820)
\put(3265.000,2101.500){\arc{2656.525}{1.1147}{2.0269}}
\blacken\path(3788.058,864.131)(3850.000,909.000)(3775.566,891.406)(3788.058,864.131)
\put(565,999){\ellipse{226}{226}}
\put(1229,1005){\ellipse{226}{226}}
\put(1915,999){\ellipse{226}{226}}
\put(2612,999){\ellipse{226}{226}}
\put(3287,999){\ellipse{226}{226}}
\put(3963,999){\ellipse{226}{226}}
\path(1330,999)(1780,999)
\blacken\path(1705.000,984.000)(1780.000,999.000)(1705.000,1014.000)(1705.000,984.000)
\path(3400,999)(3850,999)
\blacken\path(3775.000,984.000)(3850.000,999.000)(3775.000,1014.000)(3775.000,984.000)
\put(565,279){\makebox(0,0)[b]{\smash{{\SetFigFont{7}{8.4}{\familydefault}{\mddefault}{\updefault}$1$}}}}
\put(1227,279){\makebox(0,0)[b]{\smash{{\SetFigFont{7}{8.4}{\familydefault}{\mddefault}{\updefault}$2$}}}}
\put(1915,279){\makebox(0,0)[b]{\smash{{\SetFigFont{7}{8.4}{\familydefault}{\mddefault}{\updefault}$3$}}}}
\put(2612,279){\makebox(0,0)[b]{\smash{{\SetFigFont{7}{8.4}{\familydefault}{\mddefault}{\updefault}$4$}}}}
\put(3287,279){\makebox(0,0)[b]{\smash{{\SetFigFont{7}{8.4}{\familydefault}{\mddefault}{\updefault}$5$}}}}
\put(3963,279){\makebox(0,0)[b]{\smash{{\SetFigFont{7}{8.4}{\familydefault}{\mddefault}{\updefault}$6$}}}}
\put(115,279){\makebox(0,0)[b]{\smash{{\SetFigFont{7}{8.4}{\familydefault}{\mddefault}{\updefault}(b)}}}}
\put(115,954){\makebox(0,0)[b]{\smash{{\SetFigFont{7}{8.4}{\familydefault}{\mddefault}{\updefault}(a)}}}}
\put(565,954){\makebox(0,0)[b]{\smash{{\SetFigFont{7}{8.4}{\familydefault}{\mddefault}{\updefault}$1$}}}}
\put(1227,954){\makebox(0,0)[b]{\smash{{\SetFigFont{7}{8.4}{\familydefault}{\mddefault}{\updefault}$2$}}}}
\put(1915,954){\makebox(0,0)[b]{\smash{{\SetFigFont{7}{8.4}{\familydefault}{\mddefault}{\updefault}$3$}}}}
\put(2612,954){\makebox(0,0)[b]{\smash{{\SetFigFont{7}{8.4}{\familydefault}{\mddefault}{\updefault}$4$}}}}
\put(3287,954){\makebox(0,0)[b]{\smash{{\SetFigFont{7}{8.4}{\familydefault}{\mddefault}{\updefault}$5$}}}}
\put(3963,954){\makebox(0,0)[b]{\smash{{\SetFigFont{7}{8.4}{\familydefault}{\mddefault}{\updefault}$6$}}}}
\end{picture}
}
\end{center}
\caption{Nondecreasing transformations $t = [1,3,3,5,6,6]$ and $s = [4,3,3,6,6,6]$.}
\label{fig:fixmax}
\end{figure}
\end{example}

\newcommand{\trmax}{{t_{\mathrm{max}}}} 

Define the transformation $\trmax = [2,3,\ldots,n,n]$. The subscript ``$\mathrm{max}$'' is chosen because $\Orbit(\trmax) = \{Q\}$ is the maximal element in the lattice $\rmPi_Q$. Clearly $\trmax \in \cF_Q$ and $\Fix(\trmax) = \{n\}$. For any submonoid $S$ of $\cF_Q$, let $S[\trmax]$ be the smallest monoid containing $\trmax$ and all elements of $S$. 

\begin{lemma}\label{lem:trmax} 
Let $S$ be a $\gJ$-trivial submonoid of $\cF_Q$. Then
\be 
\item 
$S[\trmax]$ is $\gJ$-trivial.
\item 
Let $\cA = (Q, \Sig, \delta, 1, \{n\})$ be the DFA in which each $a \in \Sig$ defines a distinct transformation in $S[\trmax]$. Then $\cA$ is minimal. 
\ee 
\end{lemma}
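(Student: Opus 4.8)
The plan is to establish part~1 via Saito's characterization (Theorem~\ref{thm:saito}) and part~2 by the usual reachability-plus-distinguishability argument, which is easy here because $\trmax$ is now literally one of the input letters.

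\emph{Part 1.} First, $S[\trmax]\subseteq\cF_Q$, since $\cF_Q$ is itself a monoid containing $S$ and $\trmax$, and $S[\trmax]$ is the smallest such monoid; also every $t\in\cF_Q$ fixes $n$, because $n\le nt\le n$. By Theorem~\ref{thm:saito} it then remains to verify $\Orbit(ts)=\Orbit(t)\join\Orbit(s)$ for all $t,s\in S[\trmax]$. Since $S$ is a monoid, every element of $S[\trmax]$ can be written as $s_0\trmax s_1\trmax\cdots\trmax s_m$ with $m\ge 0$ and $s_i\in S$ (group consecutive $S$-factors together, using $\tid_Q\in S$); a given element admits such a representation with $m=0$ if and only if it lies in $S$.

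The crux is the claim that \emph{if $w=s_0\trmax s_1\cdots\trmax s_m$ with $m\ge 1$, then $\Orbit(w)=\{Q\}$}, the largest element of $\rmPi_Q$. To see this, write $w=v\,\trmax\,v'$ with $v=s_0\in\cF_Q$ and $v'=s_1\trmax\cdots\trmax s_m\in\cF_Q$, and fix $i<n$. Then $iv\ge i$; if $iv<n$ then $(iv)\trmax=iv+1$, so $iw=(iv+1)v'\ge iv+1>i$, while if $iv=n$ then $(iv)\trmax=n$ and $iw=nv'=n>i$. Hence $iw>i$ for every $i<n$, while $nw=n$, so the sequence $i,iw,iw^2,\dots$ strictly increases until it stabilizes at $n$; thus all of $Q$ lies in a single orbit of $w$, i.e. $\Orbit(w)=\{Q\}$. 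Now check the orbit condition by cases. If $t,s\in S$, then $ts\in S$ and the condition holds because $S$ is $\gJ$-trivial. Otherwise at least one of $t,s$, say $t$, is not in $S$, so any representation of $t$ has $m\ge 1$; concatenating it with a representation of $s$ yields a representation of $ts$ with at least one $\trmax$-factor, so $\Orbit(ts)=\{Q\}$ by the claim, and likewise $\Orbit(t)=\{Q\}$, whence $\Orbit(t)\join\Orbit(s)=\{Q\}=\Orbit(ts)$ since $\{Q\}$ is the top of $\rmPi_Q$. In all cases the condition holds, so by Theorem~\ref{thm:saito} $S[\trmax]$ is $\gJ$-trivial.

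\emph{Part 2.} Since $\trmax\in S[\trmax]$ and the letters of $\cA$ realize all of $S[\trmax]$, some $a\in\Sig$ performs $\trmax$. From the initial state $1$ the word $a^{k-1}$ reaches state $k$, so every state is reachable. For distinguishability, recall $nt=n$ for every $t\in\cF_Q$, so $n$ is a sink; being the only final state, it is separated from every $p<n$ by $\eps$. For $p<q<n$, the word $a^{n-q}$ sends $q$ to $n\in F$ and sends $p$ to $p+(n-q)<n$, which is not final, so $a^{n-q}$ separates $p$ and $q$. Hence all states are reachable and pairwise inequivalent, so $\cA$ is minimal.

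The only mildly delicate points are the emphasized claim — the ``strictly increasing under $\trmax$'' computation — and the bookkeeping in the case analysis, in particular that the relevant dichotomy is membership in $S$ (rather than ``having a $\trmax$-free representation''), which holds because any product of elements of $S$ stays in $S$. The distinguishing argument essentially duplicates the one in the proof of Theorem~\ref{thm:Rtrivial}. I do not anticipate any serious obstacle.
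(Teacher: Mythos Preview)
Your proposal is correct and follows essentially the same approach as the paper: verify Saito's orbit-join condition by showing that any element of $S[\trmax]$ involving a $\trmax$-factor has the single orbit $\{Q\}$, and handle Part~2 via powers of the letter realizing $\trmax$. The only noteworthy difference is in the key computation: the paper shows $\Fix(t\trmax)=\Fix(\trmax t)=\{n\}$ (arguing that a fixed point of $t\trmax$ must be fixed by both $t$ and $\trmax$, hence equals $n$) and then invokes Lemma~\ref{lem:fixmax}, whereas you prove directly that $iw>i$ for $i<n$ and conclude the orbit is $\{Q\}$ by iteration; your case analysis over all pairs $t,s\in S[\trmax]$ is also more explicit than the paper's, which only spells out the pairs $(t,\trmax)$ with $t\in S$.
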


\begin{proof}
1. By Theorem~\ref{thm:saito}, it is enough to prove that for any $t \in S$, $\Orbit(t) \join \Orbit(\trmax) = \Orbit(t \trmax)$ and $\Orbit(\trmax) \join \Orbit(t) = \Orbit(\trmax t)$. Note that $\Orbit(\trmax) = \{Q\}$; so we have $\Orbit(t) \join \Orbit(\trmax) = \Orbit(\trmax) \join \Orbit(t) = \{Q\}$. On the other hand, since $S \subseteq \cF_Q$ and $\trmax \in \cF_Q$, both $t \trmax$ and $\trmax t$ are non-decreasing as well. Suppose $i \in \Fix(t \trmax)$; then $i(t \trmax) = (it)\trmax = i$. Since $\trmax$ is non-decreasing, $it \le i$; and since $t$ is also non-decreasing, $i \le it$. Hence $it = i$, and $i \trmax = i$, which implies that $i \in \Fix(\trmax)$ and $i = n$. Then $\Fix(t \trmax) = \{n\}$ and $\Orbit(t \trmax) = \{Q\}$. Similarly, $\Fix(\trmax t) = \{n\}$ and $\Orbit(\trmax t) = \{Q\}$. Therefore $S[\trmax]$ is also $\gJ$-trivial.

2. Suppose $a_0 \in \Sig$ performs the transformation $\trmax$. Each state $p \in Q$ can be reached from the initial state 1 by the word $u = a_0^{p-1}$, and $p$ accepts the word $v = a_0^{n-p}$, while all other states reject $v$. So $\cA$ is minimal. \qed
\end{proof}

For any $\gJ$-trivial submonoid $S$ of $\cF_Q$, we denote by $\cA(S, \trmax)$ the DFA in Lemma~\ref{lem:trmax}. Then $\cA(S, \trmax)$ is the quotient DFA of some $\gJ$-trivial regular language $L$. Next, we have

\begin{lemma}\label{lem:orbitfix} 
Let $S$ be a $\gJ$-trivial submonoid of $\cF_Q$. For any $t, s \in S$, 
if $\Fix(t) = \Fix(s)$, then $\Orbit(t) = \Orbit(s)$. 
\end{lemma}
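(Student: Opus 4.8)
The plan is to pin down the fixed-point set of the product $ts$ exactly, and then read off the conclusion from the equality clause of Lemma~\ref{lem:fixmax}(2). Throughout, write $F = \Fix(t) = \Fix(s)$.

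First I would note that, since $S$ is $\gJ$-trivial and $t, s, ts \in S$, Theorem~\ref{thm:saito} gives $\Orbit(ts) = \Orbit(t) \join \Orbit(s)$. In particular $\Orbit(t) \preceq \Orbit(ts)$ and $\Orbit(s) \preceq \Orbit(ts)$, so by Lemma~\ref{lem:fixmax}(2) we get $\Fix(ts) \subseteq \Fix(t) = F$ (and likewise $\Fix(ts) \subseteq \Fix(s) = F$). For the reverse inclusion, take any $i \in F$: then $it = i$ because $i \in \Fix(t)$, and $is = i$ because $i \in \Fix(s)$, hence $i(ts) = (it)s = is = i$, so $i \in \Fix(ts)$. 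Thus $F \subseteq \Fix(ts)$, and combining the two inclusions $\Fix(ts) = \Fix(t) = \Fix(s) = F$.

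Finally I would apply the ``if and only if'' part of Lemma~\ref{lem:fixmax}(2) twice. From $\Orbit(t) \preceq \Orbit(ts)$ together with $\Fix(t) = \Fix(ts)$ it follows that $\Orbit(t) = \Orbit(ts)$; from $\Orbit(s) \preceq \Orbit(ts)$ together with $\Fix(s) = \Fix(ts)$ it follows that $\Orbit(s) = \Orbit(ts)$. Hence $\Orbit(t) = \Orbit(ts) = \Orbit(s)$, as required.

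The only place where the hypotheses really get used is the two-sided squeeze on $\Fix(ts)$: the lower bound $F \subseteq \Fix(ts)$ is elementary and uses nothing but the fact that $t$ and $s$ each fix $F$ pointwise, whereas the upper bound $\Fix(ts) \subseteq F$ is exactly where $\gJ$-triviality enters, via Saito's join formula (without it $\Orbit(ts)$ could lie strictly below the join and the argument would collapse). So I do not anticipate a genuine obstacle; the small ``trick'' is simply to observe that composing $t$ and $s$ cannot enlarge the common fixed-point set, after which Lemma~\ref{lem:fixmax} does the rest.
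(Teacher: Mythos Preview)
Your proof is correct and is genuinely different from the paper's argument. The paper argues by contradiction via Simon's automata-theoretic characterisation (Theorem~\ref{thm:simon}): assuming $\Orbit(t)\neq\Orbit(s)$, it locates two distinct maximal states in a single component of the DFA $\cA(S,\trmax)$ restricted to the two letters performing $t$ and $s$, which violates Simon's condition. Your route is purely algebraic: you invoke Saito's join formula (Theorem~\ref{thm:saito}) to get $\Orbit(t),\Orbit(s)\preceq\Orbit(ts)$, squeeze $\Fix(ts)$ to equal $F$, and then finish with the equality clause of Lemma~\ref{lem:fixmax}(2). This avoids the DFA construction of Lemma~\ref{lem:trmax} entirely and is shorter; on the other hand, the paper's proof shows concretely \emph{how} $\gJ$-triviality fails when $\Fix(t)=\Fix(s)$ but $\Orbit(t)\neq\Orbit(s)$, which is pedagogically useful (cf.\ Example~\ref{ex:orbitfix}).

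One small remark on your closing commentary: the inclusion $\Fix(ts)\subseteq F$ does not actually require $\gJ$-triviality. For any $t,s\in\cF_Q$ one has $\Fix(ts)=\Fix(t)\cap\Fix(s)$: if $i(ts)=i$ then, since $s$ is non-decreasing, $it\le (it)s=i$, and since $t$ is non-decreasing, $i\le it$; hence $it=i$ and then $is=i$. So the place where $\gJ$-triviality is truly indispensable in your argument is the refinement $\Orbit(t)\preceq\Orbit(ts)$ (and its twin for $s$), which is what licenses the final appeal to the ``if and only if'' clause of Lemma~\ref{lem:fixmax}(2). This does not affect the validity of your proof, only the diagnosis of where the hypothesis bites.
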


\begin{proof}
Pick any $t, s \in S$ such that $\Fix(t) = \Fix(s)$. If $t = s$, then it is trivial that $\Orbit(t) = \Orbit(s)$. Assume $t \neq s$, and $\Orbit(t) \neq \Orbit(s)$. By Part~2 of Lemma~\ref{lem:fixmax}, we have $\Orbit(t) \not\prec \Orbit(s)$ and $\Orbit(s) \not\prec \Orbit(t)$. Then there exists $i \in Q$ such that $\omega_t(i) \not\subseteq \omega_s(i)$. Let $p = \max(\omega_t(i))$. We define $q \in Q$ as follows. If $\max(\omega_t(i)) \neq \max(\omega_s(i))$, then let $q = \max(\omega_s(i))$; so $q \neq p$. Otherwise $\max(\omega_t(i)) = \max(\omega_s(i))$, and there exists $j \in \omega_t(i)$ such that $j \not\in \omega_s(i)$; let $q = \max(\omega_s(j))$. Now $p = \max(\omega_t(j)) = \max(\omega_t(i)) = \max(\omega_s(i))$, and since $j \not\in \omega_s(i)$, we have $q \neq p$ as well. Note that $p,q \in \Fix(t) = \Fix(s)$ in both cases. Consider the DFA $\cA(S, \trmax)$ with alphabet $\Sig$, and suppose that $a \in \Sig$ performs $t$ and $b \in \Sig$ performs~$s$. Let $\cB$ be the DFA $\cA(S, \trmax)$ restricted to $\{a,b\}$. Since $p \in \omega_t(i)$ and $q \in \omega_s(i)$, $p, q$ are in the same component $P$ of $\cB$. However, $p$ and $q$ are two distinct maximal states in $P$, which contradicts Theorem~\ref{thm:simon}. Therefore $\Orbit(t) = \Orbit(s)$. \qed
\end{proof}

\begin{example}\label{ex:orbitfix}
To illustrate one usage of Lemma~\ref{lem:orbitfix}, we consider two non-decreasing transformations $t = [2,2,4,4]$ and $s = [3,2,4,4]$. They have the same set of fixed points $\Fix(t) = \Fix(s) = \{2,4\}$. However, $\Orbit(t) = \{\{1,2\}, \{3,4\}\}$ and $\Orbit(s) = \{\{2\}, \{1,3,4\}\}$. By Lemma~\ref{lem:orbitfix}, $t$ and $s$ cannot appear together in a $\gJ$-trivial monoid. Indeed, consider any minimal DFA $\cA$ having at least two inputs $a,b$ such that $a$ performs $t$ and $b$ performs~$s$. The DFA $\cB$ of $\cA$ restricted to the alphabet $\{a,b\}$ is shown in Fig.~\ref{fig:orbitfix}. There is only one component in $\cB$, but there are two maximal states $2$ and $4$. By Theorem~\ref{thm:simon}, the syntactic monoid of $\cA$ is not $\gJ$-trivial. \qedb
\end{example}

\begin{figure}[hbt]
\begin{center}
\setlength{\unitlength}{0.00065617in}
\begingroup\makeatletter\ifx\SetFigFont\undefined%
\gdef\SetFigFont#1#2#3#4#5{%
  \reset@font\fontsize{#1}{#2pt}%
  \fontfamily{#3}\fontseries{#4}\fontshape{#5}%
  \selectfont}%
\fi\endgroup%
{\renewcommand{\dashlinestretch}{30}
\begin{picture}(4070,1408)(0,-10)
\put(1587.000,1011.250){\arc{337.500}{2.4981}{6.9267}}
\blacken\path(1725.575,1033.641)(1722.000,910.000)(1783.339,1017.413)(1725.575,1033.641)
\put(1587,730){\ellipse{450}{450}}
\put(3837.000,1011.250){\arc{337.500}{2.4981}{6.9267}}
\blacken\path(3975.575,1033.641)(3972.000,910.000)(4033.339,1017.413)(3975.575,1033.641)
\put(3837,730){\ellipse{450}{450}}
\put(1587.000,1855.000){\arc{3150.000}{0.9273}{2.2143}}
\blacken\path(2450.498,501.955)(2532.000,595.000)(2416.300,551.256)(2450.498,501.955)
\put(462,730){\ellipse{450}{450}}
\put(2712,730){\ellipse{450}{450}}
\path(2937,730)(3612,730)
\blacken\path(3492.000,700.000)(3612.000,730.000)(3492.000,760.000)(3492.000,700.000)
\path(12,730)(237,730)
\blacken\path(117.000,700.000)(237.000,730.000)(117.000,760.000)(117.000,700.000)
\path(687,730)(1362,730)
\blacken\path(1242.000,700.000)(1362.000,730.000)(1242.000,760.000)(1242.000,700.000)
\put(462,663){\makebox(0,0)[b]{\smash{{\SetFigFont{9}{10.8}{\familydefault}{\mddefault}{\updefault}$1$}}}}
\put(1587,663){\makebox(0,0)[b]{\smash{{\SetFigFont{9}{10.8}{\familydefault}{\mddefault}{\updefault}$2$}}}}
\put(2712,663){\makebox(0,0)[b]{\smash{{\SetFigFont{9}{10.8}{\familydefault}{\mddefault}{\updefault}$3$}}}}
\put(3837,663){\makebox(0,0)[b]{\smash{{\SetFigFont{9}{10.8}{\familydefault}{\mddefault}{\updefault}$4$}}}}
\put(1025,775){\makebox(0,0)[b]{\smash{{\SetFigFont{8}{9.6}{\familydefault}{\mddefault}{\updefault}$a$}}}}
\put(1587,55){\makebox(0,0)[b]{\smash{{\SetFigFont{8}{9.6}{\familydefault}{\mddefault}{\updefault}$b$}}}}
\put(1587,1270){\makebox(0,0)[b]{\smash{{\SetFigFont{8}{9.6}{\familydefault}{\mddefault}{\updefault}$a,b$}}}}
\put(3837,1270){\makebox(0,0)[b]{\smash{{\SetFigFont{8}{9.6}{\familydefault}{\mddefault}{\updefault}$a,b$}}}}
\put(3297,775){\makebox(0,0)[b]{\smash{{\SetFigFont{8}{9.6}{\familydefault}{\mddefault}{\updefault}$a,b$}}}}
\end{picture}
}
\end{center}
\caption{DFA $\cB$ with two inputs $a$ and $b$, where $t_a = [2,2,4,4]$ and $t_b = [3,2,4,4]$.}
\label{fig:orbitfix}
\end{figure}

Let $\pi$ be any partition of $Q$. A block $X$ of $\pi$ is \emph{trivial} if it contains only one element of $Q$; otherwise it is \emph{non-trivial}. We define the set $\cE(\pi) = \{t \in \cF_Q \mid \Orbit(t) = \pi\}$. Then 

\begin{lemma}\label{lem:Ecard} 
If $\pi$ is a partition of $Q$ with $r$ blocks, where $1 \le r \le n$, then $|\cE(\pi)| \le (n-r)!$. Moreover, the equality holds if and only if $\pi$ has exactly one non-trivial block.
\end{lemma}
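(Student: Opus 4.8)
The plan is to compute $|\cE(\pi)|$ exactly and then estimate. First I would give a structural description of $\cE(\pi)$: a non-decreasing transformation $t$ satisfies $\Orbit(t)=\pi$ if and only if $t$ maps each block $X\in\pi$ into itself and the restriction $t|_X$ is a single orbit on $X$. For the ``only if'' direction, note that $i\mathbin{\omega_t}(it)$ always holds (take $k=1$, $l=0$ in the definition of $\omega_t$), so $it$ lies in the same block of $\pi=\Orbit(t)$ as $i$; hence $t$ preserves blocks, and two distinct blocks can never be merged into a single orbit since $t$ never leaves a block. For the ``if'' direction, block-preservation together with the single-orbit condition on each block gives $\Orbit(t)=\pi$ at once. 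This reduces the count to a product:
\[
|\cE(\pi)| \;=\; \prod_{X\in\pi} f(|X|),
\]
where $f(k)$ is the number of non-decreasing single-orbit transformations of a $k$-element linearly ordered set (with the induced order).

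Second, I would show $f(k)=(k-1)!$. Write the block as $x_1<x_2<\cdots<x_k$. Since $t|_X$ maps $X$ into $X$ and is non-decreasing, $x_kt=x_k$ is forced. By Part~1 of Lemma~\ref{lem:fixmax}, a single orbit on $X$ has $\Fix(t|_X)=\Max(\Orbit(t|_X))=\{x_k\}$; conversely, if $x_k$ is the only fixed point of $t|_X$, then for any $x_i$ the sequence $x_i, x_it, x_it^2,\dots$ is non-decreasing in a finite set, hence stabilises at a fixed point, which must be $x_k$, so $x_i\mathbin{\omega_t}x_k$ for every $i$ and $t|_X$ is a single orbit. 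Thus $t|_X$ is a single orbit precisely when $x_it\in\{x_{i+1},\dots,x_k\}$ for every $i<k$; these $k-i$ choices are independent, so $f(k)=\prod_{i=1}^{k-1}(k-i)=(k-1)!$. Writing $k_1,\dots,k_r$ for the block sizes of $\pi$ (so $\sum_i k_i=n$), we get $|\cE(\pi)|=\prod_{i=1}^{r}(k_i-1)!$.

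Third, the estimate. Put $N=n-r=\sum_{i=1}^{r}(k_i-1)$. The multinomial coefficient $\binom{N}{k_1-1,\dots,k_r-1}=N!/\prod_{i}(k_i-1)!$ is a positive integer, hence at least $1$, so $\prod_i(k_i-1)!\le N!=(n-r)!$, which is the claimed bound. Equality holds iff this multinomial coefficient equals $1$, i.e.\ iff at most one of the numbers $k_i-1$ is nonzero, i.e.\ iff $\pi$ has at most one non-trivial block; when $r<n$ this is exactly the condition that $\pi$ have exactly one non-trivial block (and when $r=n$, $\pi$ is the all-singleton partition with $|\cE(\pi)|=1=0!$, trivially an equality).

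The step needing the most care is the structural characterisation in the first two paragraphs---specifically, verifying that a non-decreasing transformation of a block whose unique fixed point is the block maximum is necessarily a single orbit (the monotone-sequence argument above), and checking that $t|_X$ genuinely inherits the non-decreasing property under the induced linear order so that Lemma~\ref{lem:fixmax} applies to it. Once $|\cE(\pi)|=\prod_i(k_i-1)!$ is established, the remaining inequality and its equality case are just the standard multinomial bound.
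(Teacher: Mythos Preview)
Your argument is correct and tracks the paper's proof closely in its first half: both establish the exact count $|\cE(\pi)|=\prod_{i=1}^r (k_i-1)!$ by observing that $t$ must map each block into itself, fix the block maximum, and send every other block element strictly upward within the block. The paper is terser about the structural characterisation (it simply asserts the image constraints), whereas you spell out the ``if and only if'' between single-orbit restrictions and the condition $x_it\in\{x_{i+1},\dots,x_k\}$; that extra care is fine and not strictly needed.

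The genuine difference is in the inequality step. The paper orders the block sizes as $k_1\le\cdots\le k_r$ and, when more than one block is non-trivial, bounds each intermediate factor by $(k_i-1)!<(k_r-1)^{k_i-1}$ and then telescopes the product explicitly up to $(n-r)!$. Your multinomial observation---that $\binom{n-r}{k_1-1,\dots,k_r-1}$ is a positive integer, hence $\prod_i(k_i-1)!\le (n-r)!$ with equality precisely when at most one $k_i-1$ is nonzero---is shorter, avoids the ordering and the chain of strict inequalities, and gives the equality characterisation for free. Both routes reach the same conclusion; yours is the more economical one. Your remark on the $r=n$ boundary case (all blocks trivial, equality $1=0!$) is also a welcome clarification that the paper leaves implicit.
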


\begin{proof}
Suppose $\pi = \{X_1, \ldots, X_r\}$, and $|X_i| = k_i$ for each $i$, $1 \le i \le r$. Without loss of generality, we can rearrange blocks $X_i$ so that $k_1 \le \cdots \le k_r$. Let $t \in \cE(\pi)$ be any transformation. Then $t \in \cF_Q$, and hence $\Fix(t) = \Max(\Orbit(t)) = \Max(\pi)$. Consider each block $X_i$, and suppose $X_i = \{j_1, \ldots, j_{k_i}\}$ with $j_1 < \cdots < j_{k_i}$. Since $j_{k_i} = \max(X_i)$, we have $j_{k_i} \in \Fix(t)$ and $j_{k_i}t = j_{k_i}$. On the other hand, if $1 \le l < k_i$, then $j_l \not\in \Max(\pi)$, and since $t \in \cF_Q$, we have $j_l t > j_l$; since $j_l t \in \omega_t(j_l) = X_i$, $j_l t \in \{j_{l+1},\ldots,j_{k_i}\}$. So there are $(k_i-1)!$ different $t|_{X_i}$, and there are $\prod_{i=1}^r (k_i-1)!$ different transformations $t$ in $\cE(\pi)$. 

Clearly, if $r = 1$, then $k_r = n$ and $|\cE(\pi)| = (n-1)!$. Assume $r \ge 2$. Note that $k_i \ge 1$ for all $i$, $1 \le i \le r$, and $\sum_{i=1}^r k_i = n$. If $k_1 = \cdots = k_{r-1} = 1$, then $k_r = n-r+1$, and $|\cE(\pi)| = (k_r - 1)! \prod_{i=1}^{r-1}0! = (n-r)!$. Otherwise, let $h$ be the smallest index such that $k_h > 1$. For all $i$, $h \le i \le r-1$, since $k_i \le k_r$, we have $(k_i-1)! < (k_i-1)^{k_i-1} \le (k_r-1)^{k_i-1}$. Then
\begin{align*}
  |\cE(\pi)| &= (k_r-1)! \prod_{i=1}^{h-1}0! \prod_{i=h}^{r-1} (k_i-1)! 
             < (k_r-1)! \prod_{i=h}^{r-1} (k_r-1)^{k_i-1} \\
             &= (k_r-1)! \cdot (k_r-1)^{\sum_{i=h}^{r-1}(k_i-1)} \\
             &< (k_r-1)! \cdot k_r (k_r+1) \cdots (k_r-1+\sum_{i=h}^{r-1}(k_i-1)) \\
             &= (k_r-1)! \cdot k_r (k_r+1) \cdots (n-r) 
             = (n-r)!
\end{align*}
Therefore the lemma holds. \qed
\end{proof}

\begin{example}\label{ex:Ecard}
Suppose $n = 10$, $r = 3$, and consider the partition $\pi = \{X_1, X_2, X_3\}$, where $X_1 = \{1,2,5\}$, $X_2 = \{3,7\}$, and $X_3 = \{4,6,8,9,10\}$. Then $k_1 = |X_1| = 3$, $k_2 = |X_2| = 2$, and $k_3 = |X_3| = 5$. Let $t \in \cE(\pi)$ be an arbitrary transformation; then $\Fix(t) = \{5,7,10\}$. For any $i \in X_1$, if $i = 1$, then $it$ could be $2$ or $5$; otherwise $i = 2$ or $5$, and $it$ must be $5$. So there are $(k_1-1)! = 2!$ different $t|_{X_1}$. Similarly, there are $(k_2-1)! = 1!$ different $t|_{X_2}$ and $(k_3-1)! = 4!$ different $t|_{X_3}$. So $|\cE(\pi)| = 2!1!4! = 48$. 

Consider another partition $\pi' = \{X_1', X_2', X_3'\}$ with three blocks, where $X_1' = \{5\}$, $X_2' = \{7\}$, and $X_3' = \{1,2,3,4,6,8,9,10\}$. Now $k_1 = |X_1'| = 1$, $k_2 = |X_2'| =~1$, and $k_3 = |X_3'| = 8$. We have $\Max(\pi') = \Max(\pi) = \{5,7,10\}$. Then, for any $t \in \cE(\pi')$, $\Fix(t) = \{5,7,10\}$ as well. Since $k_1 = k_2 = 1$, both $t|_{X_1}$ and $t|_{X_2}$ are unique. There are $(k_3-1)! = 7!$ different $t|_{X_3}$. Together we have $|\cE(\pi')| = 1!1!7! = (10-3)! = 5040$, which is the upper bound in Lemma~\ref{lem:Ecard} for $n = 10$ and $r = 3$. \qedb
\end{example}

Note that, for any $t \in \cF_Q$, we have $n \in \Fix(t)$. Let $\cP_n(Q)$ be the set of all subsets $Z$ of $Q$ such that $n \in Z$. Then we obtain the following upper bound. 

\begin{proposition}\label{prop:Jbound} 
For $n \ge 1$, if $S$ is a $\gJ$-trivial submonoid of $\cF_Q$, then $$|S| \le \sum_{r=1}^n C^{n-1}_{r-1} (n-r)! = \lfloor e(n-1)! \rfloor.$$ 
\end{proposition}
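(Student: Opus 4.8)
The plan is to stratify $S$ according to the fixed-point set of its elements and then count. For each $Z \in \cP_n(Q)$, put $S_Z = \{t \in S \mid \Fix(t) = Z\}$. Since $\Fix$ is a function on $S$ and every $t \in \cF_Q$ satisfies $n \in \Fix(t)$, the sets $S_Z$, $Z \in \cP_n(Q)$, are pairwise disjoint and cover $S$; hence $|S| = \sum_{Z \in \cP_n(Q)} |S_Z|$. The whole argument then reduces to a uniform bound on each $|S_Z|$.

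First I would fix $Z \in \cP_n(Q)$ with $|Z| = r$ and show $|S_Z| \le (n-r)!$. If $S_Z = \emptyset$ this is trivial, so pick $t_0 \in S_Z$ and let $\pi = \Orbit(t_0)$. By Lemma~\ref{lem:orbitfix}, every $t \in S_Z$ has $\Orbit(t) = \Orbit(t_0) = \pi$, because $\Fix(t) = Z = \Fix(t_0)$ and $S$ is $\gJ$-trivial; thus $S_Z \subseteq \cE(\pi)$. By Part~1 of Lemma~\ref{lem:fixmax}, $\Max(\pi) = \Fix(t_0) = Z$, so $\pi$ has exactly $|Z| = r$ blocks. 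Lemma~\ref{lem:Ecard} then gives $|S_Z| \le |\cE(\pi)| \le (n-r)!$.

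Next I would sum over $Z$, grouping by $r = |Z|$. A subset $Z$ of $Q$ with $n \in Z$ and $|Z| = r$ is obtained by choosing the remaining $r-1$ elements from $Q \setminus \{n\}$, so there are $C^{n-1}_{r-1}$ of them, and $r$ ranges over $1, \ldots, n$. Therefore $|S| \le \sum_{r=1}^{n} C^{n-1}_{r-1}(n-r)!$. For the closed form, substitute $j = r-1$ and use $C^{n-1}_{j}(n-1-j)! = (n-1)!/j!$, so the sum equals $(n-1)!\sum_{j=0}^{n-1} 1/j!$, which is an integer. Comparing with $e(n-1)! = (n-1)!\sum_{j\ge 0} 1/j!$, the difference is the tail $(n-1)!\sum_{j\ge n}1/j!$; bounding it above by the geometric estimate $(n-1)!\cdot\frac{1}{n!}\sum_{i\ge 0}(n+1)^{-i} = \frac{n+1}{n^2}$ shows the tail lies strictly between $0$ and $1$ for $n \ge 2$, so the sum equals $\lfloor e(n-1)!\rfloor$; the case $n=1$ is immediate.

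I expect the only real content to be the uniform bound $|S_Z| \le (n-r)!$, and this hinges entirely on Lemma~\ref{lem:orbitfix}: it is $\gJ$-triviality of $S$ that forces all elements of $S$ sharing a common fixed-point set to share a common orbit partition, which is what lets the single bound from Lemma~\ref{lem:Ecard} apply to the whole block $S_Z$. Everything after that---the count of subsets of each size and the estimate of the tail of the exponential series---is routine bookkeeping, though I would be careful to note that $(n-1)!\sum_{j=0}^{n-1}1/j!$ is an integer so that stripping the (positive, sub-unit) tail indeed produces the floor.
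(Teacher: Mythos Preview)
Your proof is correct and follows essentially the same route as the paper: stratify $S$ by $S_Z=\{t\in S\mid \Fix(t)=Z\}$, use Lemma~\ref{lem:orbitfix} to force a common orbit partition on each $S_Z$, apply Lemma~\ref{lem:Ecard} for the bound $(n-r)!$, and sum over the $C^{n-1}_{r-1}$ choices of $Z$. The only differences are cosmetic: you make the block count of $\pi$ explicit via Lemma~\ref{lem:fixmax}, and you supply the tail estimate for the identity with $\lfloor e(n-1)!\rfloor$ where the paper simply cites it as well known.
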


\begin{proof} 
Assume $S$ is a $\gJ$-trivial submonoid of $\cF_Q$. For any $Z \in \cP_n(Q)$, let $S_Z = \{t \in S \mid \Fix(t) = Z\}$. Then $S = \bigcup_{Z \in \cP_n(Q)} S_Z$, and for any $Z_1, Z_2 \in \cP_n(Q)$ with $Z_1 \neq Z_2$, $S_{Z_1} \cap S_{Z_2} = \emptyset$.

Pick any $Z \in \cP_n(Q)$. By Lemma~\ref{lem:orbitfix}, for any $t, s \in S_Z$, since $\Fix(t) = \Fix(s) = Z$, we have $\Orbit(t) = \Orbit(s) = \pi$ for some partition $\pi \in \rmPi_Q$. Then $S_Z \subseteq \cE(\pi)$. Suppose $r = |Z|$. By Lemma~\ref{lem:Ecard}, $|S_Z| \le |\cE(\pi)| \le (n-r)!$. Since $n \in Z$, $1 \le r \le n$; and since there are $C^{n-1}_{r-1}$ different $Z \in \cP_n(Q)$, we have 
 $$|S| = \sum_{Z \in \cP_n(Q)}|S_Z|  \le \sum_{r=1}^n C^{n-1}_{r-1} (n-r)! 
                              =  \sum_{r=1}^n \frac{(n-1)!}{(r-1)!}
                             =  \lfloor e(n-1)! \rfloor. $$
The last equality is a well-known  identity in combinatorics. \qed
\end{proof}

The above upper bound is met by the following monoid $\cS_n$. For any $Z \in \cP_n(Q)$, suppose $Z = \{j_1, \ldots, j_r,n\}$ such that $j_1 < \cdots < j_r < n$ for some $r \ge 0$; then we define partition $\pi_Z = \{Q\}$ if $Z = \{n\}$, and $\pi_Z = \{\{j_1\}, \ldots, \{j_r\}, Q \setminus \{j_1,\ldots,j_r\}\}$ otherwise. Let 
$$\cS_n = \bigcup_{Z \in \cP_n(Q)} \cE(\pi_Z).$$

\begin{example}\label{ex:cS} 
Suppose $n = 4$; then $|\cP_4(Q)| = 2^3 = 8$. First consider $Z = \{1,3,4\} \in \cP_4(Q)$. By definition, $\pi_Z = \{\{1\}, \{3\}, \{2,4\}\}$. There is only one transformation $t_1 = [1,4,3,4]$ in $\cE(\pi_Z)$. If $Z' = \{3,4\}$, then $\pi_{Z'} = \{\{3\}, \{1,2,4\}$. There are two transformations $t_2 = [2,4,3,4]$ and $t_3 = [4,4,3,4]$ in $\cE(\pi_{Z'})$. Table~\ref{tab:cS} summarizes the number of transformations in $\cE(\pi_Z)$ for each $Z \in \cP_4(Q)$. Note that the set $\cS_4$ contains $16$ transformations in total. 
\qedb

\begin{table}[hbt]
\caption{Number of transformations in $\cE(\pi_Z)$ for each $Z \in \cP_4(Q)$.}
\label{tab:cS}
\begin{center}
$
\begin{array}{|c|c|c|}    
\hline
 ~Z~ & ~\txt{Blocks of} \pi_Z~ & ~|\cE(\pi_Z)|~ \\ 
\hline \hline
 \{1,2,3,4\} & \{1\}, \{2\}, \{3\}, \{4\} & 1 \\
\hline 
 \{1,2,4\}   & \{1\}, \{2\}, \{3,4\}      & 1 \\
\hline 
 \{1,3,4\}   & \{1\}, \{3\}, \{2,4\}      & 1 \\
\hline  
 \{2,3,4\}   & \{2\}, \{3\}, \{1,4\}      & 1 \\
\hline 
 \{1,4\}     & \{1\}, \{2,3,4\}           & 2 \\
\hline 
 \{2,4\}     & \{2\}, \{1,3,4\}           & 2 \\
\hline 
 \{3,4\}     & \{3\}, \{1,2,4\}           & 2 \\
\hline 
 \{4\}       & \{1,2,3,4\}                & 6 \\
\hline
\end{array}
$
\end{center}
\end{table}
\end{example}

\begin{proposition}\label{prop:cS}
For $n \ge 1$, the set $\cS_n$ is a $\gJ$-trivial submonoid of $\cF_Q$ with cardinality 
\begin{equation}\label{eq:gt}
  g(n) = |\cS_n| = \sum_{r=1}^n C^{n-1}_{r-1} (n-r)! = \lfloor e(n-1)! \rfloor. 
\end{equation}
\end{proposition}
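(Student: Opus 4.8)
The plan is to establish the three assertions in turn: that $\cS_n\subseteq\cF_Q$ and $\cS_n$ is closed under composition (so it is a submonoid, since $\tid_Q\in\cE(\pi_Q)\subseteq\cS_n$ because $\Orbit(\tid_Q)$ is the all-singletons partition $\pi_Q$); that $\cS_n$ is $\gJ$-trivial; and that $|\cS_n|=g(n)$.

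The cardinality count is the easy part. Since $\Max(\pi_Z)=Z$ for every $Z\in\cP_n(Q)$, the partitions $\pi_Z$ are pairwise distinct, hence the sets $\cE(\pi_Z)$ are pairwise disjoint. Each $\pi_Z$ has at most one non-trivial block, namely $Q\setminus(Z\setminus\{n\})$, so Lemma~\ref{lem:Ecard} (together with the trivial case $Z=Q$, where $\cE(\pi_Q)=\{\tid_Q\}$) gives $|\cE(\pi_Z)|=(n-|Z|)!$. Summing over $Z$ and grouping by $r=|Z|$, of which there are $C^{n-1}_{r-1}$ values containing $n$, gives $|\cS_n|=\sum_{r=1}^{n}C^{n-1}_{r-1}(n-r)!$, which is the quantity shown to equal $\lfloor e(n-1)!\rfloor$ in Proposition~\ref{prop:Jbound}.

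For the submonoid and triviality claims I would first record a convenient reformulation of membership in $\cS_n$: by Lemma~\ref{lem:fixmax}, $\Fix(t)=\Max(\Orbit(t))$, so $t\in\cS_n$ iff $\Orbit(t)=\pi_{\Fix(t)}$, and this holds iff every non-trivial orbit of $t$ contains $n$, equivalently iff for each $i\notin\Fix(t)$ there is $k\ge 1$ with $it^k=n$. Next, for any $t,s\in\cF_Q$ a one-line monotonicity argument gives $\Fix(ts)=\Fix(t)\cap\Fix(s)$: if $i(ts)=i$ then $i\le it\le(it)s=i$ forces $it=i$ and $is=i$. Now take $t,s\in\cS_n$ and $i\notin\Fix(ts)$, and form the non-decreasing sequence $i,\,it,\,its,\,itst,\,\ldots$ obtained by applying $t$ and $s$ alternately. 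It is eventually constant at some $m\in\Fix(t)\cap\Fix(s)$ with $i(ts)^k=m$ for large $k$. If $m<n$, then $m$ is a non-$n$ fixed point of both $t$ and $s$, so by the reformulation $\{m\}$ is a singleton orbit of $t$ and of $s$; hence nothing other than $m$ maps to $m$ under $t$ or under $s$, so the sequence cannot strictly increase into $m$, forcing $i=m\in\Fix(ts)$, which is absurd. Therefore $m=n$, so $n\in\omega_{ts}(i)$. This shows every non-trivial orbit of $ts$ contains $n$, hence $\Orbit(ts)=\pi_{\Fix(ts)}=\pi_{\Fix(t)\cap\Fix(s)}$; since $n\in\Fix(t)\cap\Fix(s)\in\cP_n(Q)$, we get $ts\in\cS_n$, proving closure.

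It then remains to verify $\gJ$-triviality via Saito's criterion (Theorem~\ref{thm:saito}): $\cS_n$ is a submonoid of $\cF_Q$, and I must check $\Orbit(ts)=\Orbit(t)\join\Orbit(s)$ for all $t,s\in\cS_n$. With $Z_1=\Fix(t)$ and $Z_2=\Fix(s)$, a direct computation in $\rmPi_Q$ shows $\pi_{Z_1}\join\pi_{Z_2}=\pi_{Z_1\cap Z_2}$ — both partitions consist of the singletons $\{j\}$ for $j\in(Z_1\cap Z_2)\setminus\{n\}$ together with the block of all remaining elements — and we already have $\Orbit(ts)=\pi_{Z_1\cap Z_2}$ with $\Orbit(t)=\pi_{Z_1}$, $\Orbit(s)=\pi_{Z_2}$. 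The main obstacle is the closure step, and within it the heart is the claim that every non-fixed point of $ts$ iterates to $n$; the right reformulation of $\cS_n$-membership together with a careful look at the interleaved $t$--$s$ orbit are what make this go through, after which $\gJ$-triviality and the cardinality are routine.
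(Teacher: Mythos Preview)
Your proposal is correct and follows essentially the same architecture as the paper: establish closure of $\cS_n$ under composition, verify Saito's criterion $\Orbit(ts)=\Orbit(t)\join\Orbit(s)$ for $\gJ$-triviality, and count via Lemma~\ref{lem:Ecard}. The only notable difference is in the execution of the closure step. The paper argues by contradiction: assuming some non-trivial block $X_0$ of $\Orbit(ts)$ omits $n$, it picks a non-maximal $i\in X_0$ and uses that $i$ must lie in the $n$-orbit of $t$ or $it$ in the $n$-orbit of $s$ to force $n\in X_0$. You instead track the interleaved sequence $i,\,it,\,its,\,itst,\ldots$ directly and use the singleton-orbit property of non-$n$ fixed points to rule out stabilisation below $n$. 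Your version is a little more explicit and self-contained; the paper's is terser. Either way, once $\Orbit(ts)=\pi_{\Fix(t)\cap\Fix(s)}$ is in hand, the identities $\Fix(ts)=\Fix(t)\cap\Fix(s)$ and $\pi_{Z_1}\join\pi_{Z_2}=\pi_{Z_1\cap Z_2}$ finish both proofs identically.
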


\begin{proof}
First we prove the following claim: 

\medskip

{\bf Claim:} For any $t, s \in \cS_n$, $\Orbit(ts) = \pi_Z$ for some $Z \in \cP_n(Q)$. 

Let $t \in \cE(\pi_{Z_1})$ and $s \in \cE(\pi_{Z_2})$ for some $Z_1, Z_2 \in \cP_n(Q)$. Suppose $\Orbit(ts) \neq \pi_Z$ for any $Z \in \cP_n(Q)$. Then there exists a block $X_0 \in \Orbit(ts)$ such that $n \not\in X_0$ and $|X_0| \ge 2$. Suppose $i \in X_0$ with $i \neq \max(X_0)$. We must have $i \in \omega_t(n)$ or $it \in \omega_s(n)$; otherwise $it = i$ and $(it)s = it = i$, which implies $i = \max(X_0)$. However, in either case, there exists large $m$ such that $it^m = n$ or $i(ts)^m = n$, respectively. Then $n \in \omega_{ts}(i) = X_0$, a contradiction. So the claim holds. \qedb 

\medskip

By the claim, for any $t, s \in \cS_n$, since $\Orbit(ts) = \pi_Z$ for some $Z \in \cP_n(Q)$, $ts \in \cE(\pi_Z) \subseteq \cS_n$. Hence $\cS_n$ is a submonoid of $\cF_Q$. 

Next we show that $\cS_n$ is $\gJ$-trivial. Pick any $t, s \in \cS_n$, and suppose $t \in \cE(\pi_{Z_1})$ and $s \in \cE(\pi_{Z_2})$ for some $Z_1, Z_2 \in \cP_n(Q)$. Suppose $\Max(Z_1) \cap \Max(Z_2) = \{j_1,\ldots,j_r,n\}$, for some $r \ge 0$. Then we have $Z_1 \join Z_2 = \{\{j_1\}, \ldots, \{j_r\}, X\}$, where $X = Q \setminus \{j_1,\ldots,j_r\}$ and $n \in X$. On the other hand, by the claim, $\Orbit(ts) = \{\{p_1\}, \ldots, \{p_k\}, Y\}$ for some $p_1,\ldots,p_k \in Q$, where $Y = Q \setminus \{p_1,\ldots,p_k\}$ and $n \in Y$. Note that, since $\cS_n \subseteq \cF_Q$, $\Max(\Orbit(ts)) = \Fix(ts) = \Fix(t) \cap \Fix(s) = \Max(Z_1) \cap \Max(Z_2)$. Then $r = k$ and $\{j_1,\ldots,j_r\} = \{p_1,\ldots,p_k\}$. Hence $\Orbit(t) \join \Orbit(s) = Z_1 \join Z_2 = \Orbit(ts)$. By Theorem~\ref{thm:saito}, $\cS_n$ is $\gJ$-trivial. 

For any $Z \in \cP_n(Q)$ with $|Z| = r$, where $1 \le r \le n$, we have $\pi_Z = \{X_1,\ldots,X_r\}$ with $k_i = |X_i| = 1$ for $1 \le i < r$, and $k_r = |X_r|$. By Lemma~\ref{lem:Ecard}, $|\cE(\pi_Z)| = (n-r)!$. Moreover, if $Z_1 \neq Z_2$, then $\cE(\pi_{Z_1}) \cap \cE(\pi_{Z_2}) = \emptyset$. Since $n \in Z$ is fixed, there are $C^{n-1}_{r-1}$ different $Z$. Therefore $|\cS_n| = \sum_{r=1}^n C^{n-1}_{r-1} (n-r)! = \lfloor e(n-1)! \rfloor$. \qed
\end{proof}

% Let $t$ be any transformation of $Q$. An orbit $X$ of $t$ is \emph{trivial} if it contains just one element of $Q$; otherwise it is \emph{non-trivial}. Hence any transformation $t \in \cS_n$ has only one non-trivial orbit. 
We now define a generating set of the monoid $\cS_n$. 
%\begin{definition}\label{def:genJT}
Suppose $n \ge 1$. For any $Z \in \cP_n(Q)$, if $Z = Q$, then let $t_Z = \tid_Q$. Otherwise, let $h_Z = \max(Q \setminus Z)$, and let $t_Z$ be a transformation of $Q$ defined by: For all $i \in Q$, 
\begin{equation*}
it \defeq \begin{cases} i & \txt{if} i \in Z, \\
n & \txt{if} i = h_Z, \\
h_Z & \txt{otherwise.}
\end{cases}
\end{equation*}
Let $\cG\cS_n = \{t_Z \mid Z \in \cP_n(Q)\}$. 
%\end{definition}

\begin{example}\label{ex:genJT}
Suppose $n = 5$. As the first example, consider $Z = \{1,3,4,5\}$. Then $h_Z = \max(Q \setminus Z) = 2$, and $t_Z = [1,5,3,4,5]$. If $Z' = \{4,5\}$, then $h_{Z'} = 4$ and $t_{Z'} = [3,3,5,4,5]$. If $Z'' = \{5\}$, then $h_{Z''} = 4$ and $t_{Z''} = [4,4,4,5,5]$. The set $\cG\cS_5$ contains the following $16$ transformations: 
$$\begin{array}{lll}
t_1 = [ 1, 2, 3, 4, 5 ], \quad & \quad t_2 = [ 1, 2, 3, 5, 5 ], \quad & \quad t_3 = [ 1, 2, 4, 5, 5 ], \\
t_4 = [ 1, 2, 5, 4, 5 ], \quad & \quad t_5 = [ 1, 3, 5, 4, 5 ], \quad & \quad t_6 = [ 1, 4, 3, 5, 5 ], \\
t_7 = [ 1, 4, 4, 5, 5 ], \quad & \quad t_8 = [ 1, 5, 3, 4, 5 ], \quad & \quad t_9 = [ 2, 5, 3, 4, 5 ], \\
t_{10} = [ 3, 2, 5, 4, 5 ], \quad & \quad t_{11} = [ 3, 3, 5, 4, 5 ], \quad & \quad t_{12} = [ 4, 2, 3, 5, 5 ], \\
t_{13} = [ 4, 2, 4, 5, 5 ], \quad & \quad t_{14} = [ 4, 4, 3, 5, 5 ], \quad & \quad t_{15} = [ 4, 4, 4, 5, 5 ], \\
t_{16} = [ 5, 2, 3, 4, 5 ].
\end{array}$$ 
\qedb
\end{example}

\begin{proposition}\label{prop:genJT}
For $n \ge 1$, the monoid $\cS_n$ can be generated by the set $\cG\cS_n$ of $2^{n-1}$ transformations of $Q$. 
\end{proposition}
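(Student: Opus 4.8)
The plan is to prove both assertions by one induction on $n$. I would first dispose of the easy parts. A direct computation gives $\Fix(t_Z)=Z$ for every $Z\in\cP_n(Q)$ (the elements of $Z$ are fixed by definition, $h_Z$ goes to $n\ne h_Z$, and every other element goes to $h_Z$); hence the transformations $t_Z$ are pairwise distinct and $|\cG\cS_n|=|\cP_n(Q)|=2^{n-1}$. Likewise $\Orbit(t_Z)=\pi_Z$: each element of $Z\setminus\{n\}$ is fixed and is the image of no other element, so it is a singleton orbit, while $h_Z$ and everything mapped into it join the orbit of $n$. Thus $t_Z\in\cE(\pi_Z)\subseteq\cS_n$, and since $\cS_n$ is a monoid (Proposition~\ref{prop:cS}) we get $\langle\cG\cS_n\rangle\subseteq\cS_n$. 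It remains to prove $\cS_n\subseteq\langle\cG\cS_n\rangle$, the cases $n\le 2$ being immediate.

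The engine of the induction is the observation that the fixed-point set of a product of non-decreasing transformations is the intersection of the fixed-point sets of the factors (if $g_1\cdots g_k$ fixes $p$ then, all $g_i$ being non-decreasing, $p\le pg_1\le pg_1g_2\le\cdots=p$). Consequently, to write $t\in\cE(\pi_Z)$ with $Z=\{j_1<\cdots<j_{r-1}<n\}$ as a product of generators one may only use generators $t_W$ with $W\supseteq\{j_1,\dots,j_{r-1}\}$; and since $\max X=\max Q=n$ for $X:=Q\setminus\{j_1,\dots,j_{r-1}\}$, a short check shows that, restricted to $X$, these are precisely the analogous generators built on $X$, and conversely every such generator on $X$ extended by the identity off $X$ is a generator in $\cG\cS_n$. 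Now if $Z\ne\{n\}$ then $|X|<n$: $t$ fixes $j_1,\dots,j_{r-1}$, and $t|_X$ is a single-orbit non-decreasing transformation of $X$, hence lies in $\cS_{|X|}$; by the induction hypothesis it is a product of the generators on $X$, and extending these expresses $t$ in $\langle\cG\cS_n\rangle$. This reduces everything to showing $\cE(\{Q\})\subseteq\langle\cG\cS_n\rangle$, i.e.\ that every non-decreasing $t$ with $it>i$ for all $i<n$ is generated.

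So let $t\in\cE(\{Q\})$ and set $j=1t$. If $j=n$, let $p$ fix $1$ and agree with $t$ on $\{2,\dots,n\}$; then $p\in\cE(\pi_{\{1,n\}})$ and $t=p\cdot t_{\{2,\dots,n-1\}}$, because $t_{\{2,\dots,n-1\}}$ sends $1$ to $n$ and fixes $\{2,\dots,n\}$. If $j<n$ and $t^{-1}(j)=\{1\}$, the same works with $t_{Q\setminus\{1,j\}}$ (which sends $1\to j$, $j\to n$, fixes the rest): no state other than $1$ is ever at value $j$ when this generator acts, so its ``$j\to n$'' part is vacuous. In either case $p\in\cE(\pi_{Z'})$ for some $Z'\ne\{n\}$, so $p\in\langle\cG\cS_n\rangle$ by the reduction of the previous paragraph, and $t$ follows.

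The remaining — and, I expect, the main — difficulty is $j=1t<n$ with $|t^{-1}(j)|\ge2$. When $jt=n$ I would take $t=t_{Q\setminus(t^{-1}(j)\cup\{j\})}\cdot s$, where the generator collapses $t^{-1}(j)$ onto $j$ and sends $j\to n$, and $s$ sends every element of $t^{-1}(j)$ to $n$, fixes $j$, and agrees with $t$ elsewhere; one checks $\Orbit(s)=\pi_{\{j,n\}}$, so $s\in\langle\cG\cS_n\rangle$ by the reduction. When $jt<n$ the set $Q\setminus t^{-1}(j)$ need not be $t$-invariant (some state may be mapped by $t$ into $t^{-1}(j)$), the naive decompositions fail, and one must prefix a short run of generators that puts the ``high'' part of the orbit in place first — processing states in decreasing order of their target values so that each generator's spurious image lands on a currently unoccupied value — before collapsing $t^{-1}(j)$ onto $j$. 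The delicate point throughout is that every intermediate product must remain in $\cS_n$: by Simon's characterization (Theorem~\ref{thm:simon}) this means never creating, in any two-generated subautomaton, a component with two maximal states, i.e.\ never prematurely merging states that $t$ keeps in distinct orbits. Carrying this out — an outer induction on $n$ and a secondary induction on $n-1t$ — yields $t\in\langle\cG\cS_n\rangle$ and completes the proof.
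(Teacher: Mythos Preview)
Your reduction from a general $Z\in\cP_n(Q)$ to the single-orbit case $Z=\{n\}$ by restricting to $X=Q\setminus(Z\setminus\{n\})$ is correct and is a pleasant alternative to what the paper does; the generators $t_W$ with $W\supseteq Z\setminus\{n\}$ do restrict to the analogous generators on $X$, and this cleanly peels off the trivial orbits.

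The gap is in the single-orbit case itself. First, your worry that ``every intermediate product must remain in $\cS_n$'' is unfounded: $\cS_n$ is a monoid (Proposition~\ref{prop:cS}) containing $\cG\cS_n$, so any product of generators lies in $\cS_n$ automatically; there is nothing delicate here. Second, and more seriously, the sub-case $1t=j<n$, $|t^{-1}(j)|\ge2$, $jt<n$ is not actually handled. Your sketch (``process states in decreasing order of their target values'', ``secondary induction on $n-1t$'') does not specify which generator to apply first, nor why the resulting cofactor lands in some $\cE(\pi_{Z'})$ with $Z'\ne\{n\}$, which is what your reduction machinery requires. I do not see how to complete the argument along the lines you indicate without essentially starting over.

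The paper avoids this entire case analysis by working with $h_Z=\max(Q\setminus Z)$ rather than $1=\min(Q\setminus Z)$. The key point is that $h_Zt=n$ for \emph{every} $t\in\cE(\pi_Z)$: since $h_Zt>h_Z$ lies in the unique non-trivial block $\{h_1,\dots,h_l,h_Z,n\}$ and exceeds its second-largest element, it must equal $n$. One then partitions $\{h_1,\dots,h_l\}=V\cup W$ according to whether $h_it=h_Z$ or not, and writes $t=t_1t_2t_3$ with $t_1=t_{Q\setminus\{h_Z\}}$ and $t_2=t_{Q\setminus(V\cup\{h_Z\})}$ both in $\cG\cS_n$, while $t_3\in\cE(\pi_{Z\cup\{h_Z\}})$ falls to a single induction on $|Q\setminus Z|$. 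No outer induction on $n$, no case split on $1t$, no secondary induction.

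(Minor slip: your $t_{\{2,\dots,n-1\}}$ should read $t_{\{2,\dots,n\}}$, since every index set in $\cP_n(Q)$ contains $n$.)
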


\begin{proof}
First, for any $t_Z \in \cG\cS_n$, where $Z \in \cP_n(Q)$, we have $\Orbit(t_Z) = \pi_Z$; hence $t_Z \in \cE(\pi_Z) \subseteq \cS_n$. So $\cG\cS_n \subseteq \cS_n$ and $\langle \cG\cS_n \rangle \subseteq \cS_n$. 

Fix arbitrary $Z \in \cP_n(Q)$, and suppose $U = Q \setminus Z$. If $U = \emptyset$, then $\pi_Z = \{\{1\},\ldots,\{n\}\}$ and $\cE(\pi_Z) = \{\tid_Q\} \subseteq \langle \cG\cS_n \rangle$. Assume $U \neq \emptyset$ in the following. Let $Y$ be the only non-trivial block in $\pi_Z$. Note that $Y = U \cup \{n\}$ and $h_Z = \max(U)$. For any $t \in \cE(\pi_Z)$, since $\Fix(t) = Z$ and $h_Z \not\in Z$, $h_Zt > h_Z$; and since $Y$ is an orbit of $t$, $h_Zt = n$. We prove by induction on $|U| = |Q \setminus Z|$ that $\cE(\pi_Z) \subseteq \langle \cG\cS_n \rangle$. 
\be
\item If $U = \{h_Z\}$, then $Y = \{h_Z,n\}$. So $h_Zt = n$, and $t = {h_Z \choose n} = t_Z \subseteq \langle \cG\cS_n \rangle$. 

\item Otherwise $U = \{h_1,\ldots,h_l,h_Z\}$ for some $h_1 < \cdots < h_l < h_Z < n$ and $l \ge 1$. Assume that, for any $Z' \in \cP_n(Q)$ with $|Q \setminus Z'| \le l$, we have $\cE(\pi_{Z'}) \subseteq \langle \cG\cS_n \rangle$. Then $Y = \{h_1,\ldots,h_l,h_Z,n\}$, and $t_Z = {h_Z \choose n}{h_l \choose h_Z} \cdots {h_1 \choose h_Z}$. For any $t \in \cE(\pi_Z)$, since $Y$ is an orbit of $t$ and $Q \setminus Y \subseteq \Fix(t)$, $t$ must have the form $t = {h_Z \choose n}{h_l \choose j_l}\cdots {h_1 \choose j_1},$ where $j_{i} \in \{h_{i+1},\ldots,h_l,h_Z,n\}$ for $i = 1,\ldots,l$. 
Let $\{h_1,\ldots,h_l\} = V \cup W$ such that $h_i \in V$ if and only if $j_i = h_it = h_Z$. Suppose $V = \{h_{p_1},\ldots,h_{p_k}\}$ and $W = \{h_{q_1},\ldots,h_{q_m}\}$, where $h_{p_1} < \cdots < h_{p_k}$, $h_{q_1} < \cdots < h_{q_m}$, $0 \le k,m \le l$ and $l = k + m$. Let $t_1 = {h_Z \choose n}$, $t_2 = {h_Z \choose n}{h_{p_1} \choose h_Z}\cdots{h_{p_k} \choose h_Z}$, and $t_3 = {h_{p_1} \choose n}\cdots{h_{p_k} \choose n}{h_{q_1} \choose j_{q_1}}\cdots{h_{q_m} \choose j_{q_m}}$. Note that $t_1 = t_{Z'}$ for $Z' = Q \setminus \{h_Z\}$, and $t_2 = t_{Z''}$ for $Z'' = Q \setminus \{h_{p_1},\ldots,h_{p_k},h_Z\}$. Also note that $\Fix(t_3) = \Fix(t) \cup \{h_Z\}$, and since $j_{q_i} = h_{q_i}t \in U \setminus \{h_Z\}$ for all $h_{q_i} \in W$, we have $t_3 \in \cE(\pi_{Z'''})$ for $Z''' = Z \cup \{h_Z\}$. By assumption, $t_3 \in \langle \cG\cS_n \rangle$. Now
\begin{eqnarray*}
  t_1t_2t_3 &=& {h_Z \choose n} {h_Z \choose n}{h_{p_1} \choose h_Z}\cdots{h_{p_k} \choose h_Z} {h_{p_1} \choose n}\cdots{h_{p_k} \choose n}{h_{q_1} \choose j_{q_1}}\cdots{h_{q_m} \choose j_{q_m}} \\
           &=& {h_Z \choose n}{h_{p_1} \choose h_Z}\cdots{h_{p_k} \choose h_Z} {h_{q_1} \choose j_{q_1}}\cdots{h_{q_m} \choose j_{q_m}}
           = t.
\end{eqnarray*}
Thus $t \in \langle \cG\cS_n \rangle$ and $\cE(\pi_Z) \subseteq \langle \cG\cS_n \rangle$. 
\ee
By induction, $\cS_n = \bigcup_{Z \in \cP_n(Q)} \cE(\pi_Z) \subseteq \langle \cG\cS_n \rangle$. Therefore $\cS_n = \langle \cG\cS_n \rangle$. Since there are $2^{n-1}$ different $Z \in \cP_n(Q)$, there are $2^{n-1}$ transformations in~$\cG\cS_n$. \qed
\end{proof}

\begin{example}\label{ex:GS}
Suppose $n = 5$. The list of all transformations in $\cG\cS_5$ is shown in Example~\ref{ex:genJT}. 
Consider $Z = \{3,5\} \in \cP_5(Q)$, and $t = [2,4,3,5,5] \in \cE(\pi_Z)$. The transition graph of $t$ is shown in Fig.~\ref{fig:genJT}~(a). As in Proposition~\ref{prop:genJT}, we have $U = \{1,2,4\}$ and $h_Z = 4$. To show that $t \in \langle \cG\cS_5 \rangle$, we find $V = \{2\}$ and $W = \{1\}$. Then let $t_1 = {4 \choose 5}$, $t_2 = {4 \choose 5}{2 \choose 4}$, and $t_3 = {2 \choose 5}{1 \choose 2}$. We assume that $t_3 \in \langle \cG\cS_5 \rangle$; in fact, $t_3 = t_{Z'''}$ for $Z''' = \{3,4,5\}$ in this example. The transition graphs of $t_1$, $t_2$, and $t_3$ are shown in Fig.~\ref{fig:genJT}~(b), (c), and~(d), respectively. One can verify that $t = t_1t_2t_3$, and hence $t \in \langle \cG\cS_5 \rangle$. 
\qedb

\begin{figure}[hbt]
\begin{center}
\setlength{\unitlength}{0.00043745in}
\begingroup\makeatletter\ifx\SetFigFont\undefined%
\gdef\SetFigFont#1#2#3#4#5{%
  \reset@font\fontsize{#1}{#2pt}%
  \fontfamily{#3}\fontseries{#4}\fontshape{#5}%
  \selectfont}%
\fi\endgroup%
{\renewcommand{\dashlinestretch}{30}
\begin{picture}(5378,4355)(0,-10)
\put(645,3881){\ellipse{450}{450}}
\put(645,2756){\ellipse{450}{450}}
\put(645,1631){\ellipse{450}{450}}
\put(645,506){\ellipse{450}{450}}
\put(2895.000,4162.250){\arc{337.500}{2.4981}{6.9267}}
\blacken\path(3033.575,4184.641)(3030.000,4061.000)(3091.339,4168.413)(3033.575,4184.641)
\put(5145.000,4162.250){\arc{337.500}{2.4981}{6.9267}}
\blacken\path(5283.575,4184.641)(5280.000,4061.000)(5341.339,4168.413)(5283.575,4184.641)
\put(2895.000,3037.250){\arc{337.500}{2.4981}{6.9267}}
\blacken\path(3033.575,3059.641)(3030.000,2936.000)(3091.339,3043.413)(3033.575,3059.641)
\put(5145.000,3037.250){\arc{337.500}{2.4981}{6.9267}}
\blacken\path(5283.575,3059.641)(5280.000,2936.000)(5341.339,3043.413)(5283.575,3059.641)
\put(2895.000,1912.250){\arc{337.500}{2.4981}{6.9267}}
\blacken\path(3033.575,1934.641)(3030.000,1811.000)(3091.339,1918.413)(3033.575,1934.641)
\put(5145.000,1912.250){\arc{337.500}{2.4981}{6.9267}}
\blacken\path(5283.575,1934.641)(5280.000,1811.000)(5341.339,1918.413)(5283.575,1934.641)
\put(645.000,1912.250){\arc{337.500}{2.4981}{6.9267}}
\blacken\path(783.575,1934.641)(780.000,1811.000)(841.339,1918.413)(783.575,1934.641)
\put(2895.000,5766.500){\arc{4581.000}{1.1238}{2.0177}}
\blacken\path(3787.784,3624.521)(3885.000,3701.000)(3763.231,3679.267)(3787.784,3624.521)
\put(2895.000,3516.500){\arc{4581.000}{1.1238}{2.0177}}
\blacken\path(3787.784,1374.521)(3885.000,1451.000)(3763.231,1429.267)(3787.784,1374.521)
\put(1770.000,3037.250){\arc{337.500}{2.4981}{6.9267}}
\blacken\path(1908.575,3059.641)(1905.000,2936.000)(1966.339,3043.413)(1908.575,3059.641)
\put(2895.000,787.250){\arc{337.500}{2.4981}{6.9267}}
\blacken\path(3033.575,809.641)(3030.000,686.000)(3091.339,793.413)(3033.575,809.641)
\put(5145.000,787.250){\arc{337.500}{2.4981}{6.9267}}
\blacken\path(5283.575,809.641)(5280.000,686.000)(5341.339,793.413)(5283.575,809.641)
\put(4020.000,787.250){\arc{337.500}{2.4981}{6.9267}}
\blacken\path(4158.575,809.641)(4155.000,686.000)(4216.339,793.413)(4158.575,809.641)
\put(3435.000,3228.500){\arc{6440.662}{1.0915}{2.0501}}
\blacken\path(4825.906,290.710)(4920.000,371.000)(4799.193,344.436)(4825.906,290.710)
\put(645.000,3037.250){\arc{337.500}{2.4981}{6.9267}}
\blacken\path(783.575,3059.641)(780.000,2936.000)(841.339,3043.413)(783.575,3059.641)
\put(2895,3881){\ellipse{450}{450}}
\put(5145,3881){\ellipse{450}{450}}
\put(1770,3881){\ellipse{450}{450}}
\put(4020,3881){\ellipse{450}{450}}
\put(2895,2756){\ellipse{450}{450}}
\put(5145,2756){\ellipse{450}{450}}
\put(1770,2756){\ellipse{450}{450}}
\put(4020,2756){\ellipse{450}{450}}
\put(2895,1631){\ellipse{450}{450}}
\put(5145,1631){\ellipse{450}{450}}
\put(1770,1631){\ellipse{450}{450}}
\put(4020,1631){\ellipse{450}{450}}
\put(2895,506){\ellipse{450}{450}}
\put(5145,506){\ellipse{450}{450}}
\put(1770,506){\ellipse{450}{450}}
\put(4020,506){\ellipse{450}{450}}
\path(870,3881)(1545,3881)
\blacken\path(1425.000,3851.000)(1545.000,3881.000)(1425.000,3911.000)(1425.000,3851.000)
\path(4245,3881)(4920,3881)
\blacken\path(4800.000,3851.000)(4920.000,3881.000)(4800.000,3911.000)(4800.000,3851.000)
\path(4245,2756)(4920,2756)
\blacken\path(4800.000,2726.000)(4920.000,2756.000)(4800.000,2786.000)(4800.000,2726.000)
\path(4245,1631)(4920,1631)
\blacken\path(4800.000,1601.000)(4920.000,1631.000)(4800.000,1661.000)(4800.000,1601.000)
\path(870,506)(1545,506)
\blacken\path(1425.000,476.000)(1545.000,506.000)(1425.000,536.000)(1425.000,476.000)
\put(645,3814){\makebox(0,0)[b]{\smash{{\SetFigFont{6}{7.2}{\familydefault}{\mddefault}{\updefault}$1$}}}}
\put(645,2689){\makebox(0,0)[b]{\smash{{\SetFigFont{6}{7.2}{\familydefault}{\mddefault}{\updefault}$1$}}}}
\put(645,1564){\makebox(0,0)[b]{\smash{{\SetFigFont{6}{7.2}{\familydefault}{\mddefault}{\updefault}$1$}}}}
\put(645,439){\makebox(0,0)[b]{\smash{{\SetFigFont{6}{7.2}{\familydefault}{\mddefault}{\updefault}$1$}}}}
\put(1770,3814){\makebox(0,0)[b]{\smash{{\SetFigFont{6}{7.2}{\familydefault}{\mddefault}{\updefault}$2$}}}}
\put(2895,3814){\makebox(0,0)[b]{\smash{{\SetFigFont{6}{7.2}{\familydefault}{\mddefault}{\updefault}$3$}}}}
\put(4020,3814){\makebox(0,0)[b]{\smash{{\SetFigFont{6}{7.2}{\familydefault}{\mddefault}{\updefault}$4$}}}}
\put(5145,3814){\makebox(0,0)[b]{\smash{{\SetFigFont{6}{7.2}{\familydefault}{\mddefault}{\updefault}$5$}}}}
\put(15,3836){\makebox(0,0)[lb]{\smash{{\SetFigFont{6}{7.2}{\rmdefault}{\mddefault}{\updefault}(a)}}}}
\put(1770,2689){\makebox(0,0)[b]{\smash{{\SetFigFont{6}{7.2}{\familydefault}{\mddefault}{\updefault}$2$}}}}
\put(2895,2689){\makebox(0,0)[b]{\smash{{\SetFigFont{6}{7.2}{\familydefault}{\mddefault}{\updefault}$3$}}}}
\put(4020,2689){\makebox(0,0)[b]{\smash{{\SetFigFont{6}{7.2}{\familydefault}{\mddefault}{\updefault}$4$}}}}
\put(5145,2689){\makebox(0,0)[b]{\smash{{\SetFigFont{6}{7.2}{\familydefault}{\mddefault}{\updefault}$5$}}}}
\put(1770,1564){\makebox(0,0)[b]{\smash{{\SetFigFont{6}{7.2}{\familydefault}{\mddefault}{\updefault}$2$}}}}
\put(2895,1564){\makebox(0,0)[b]{\smash{{\SetFigFont{6}{7.2}{\familydefault}{\mddefault}{\updefault}$3$}}}}
\put(4020,1564){\makebox(0,0)[b]{\smash{{\SetFigFont{6}{7.2}{\familydefault}{\mddefault}{\updefault}$4$}}}}
\put(5145,1564){\makebox(0,0)[b]{\smash{{\SetFigFont{6}{7.2}{\familydefault}{\mddefault}{\updefault}$5$}}}}
\put(15,2711){\makebox(0,0)[lb]{\smash{{\SetFigFont{6}{7.2}{\rmdefault}{\mddefault}{\updefault}(b)}}}}
\put(15,1586){\makebox(0,0)[lb]{\smash{{\SetFigFont{6}{7.2}{\rmdefault}{\mddefault}{\updefault}(c)}}}}
\put(1770,439){\makebox(0,0)[b]{\smash{{\SetFigFont{6}{7.2}{\familydefault}{\mddefault}{\updefault}$2$}}}}
\put(2895,439){\makebox(0,0)[b]{\smash{{\SetFigFont{6}{7.2}{\familydefault}{\mddefault}{\updefault}$3$}}}}
\put(4020,439){\makebox(0,0)[b]{\smash{{\SetFigFont{6}{7.2}{\familydefault}{\mddefault}{\updefault}$4$}}}}
\put(5145,439){\makebox(0,0)[b]{\smash{{\SetFigFont{6}{7.2}{\familydefault}{\mddefault}{\updefault}$5$}}}}
\put(15,461){\makebox(0,0)[lb]{\smash{{\SetFigFont{6}{7.2}{\rmdefault}{\mddefault}{\updefault}(d)}}}}
\end{picture}
}
\end{center}
\caption[Transition graphs of $t$, $t'$, and $t_{Z''}$.]{Transition graphs of $t = [2,4,3,5,5]$, $t' = [1,4,3,5,5]$, and $t_{Z''} = [2,5,3,4,5]$.}
\label{fig:genJT}
\end{figure}
\end{example}

Now, by Propositions~\ref{prop:Jbound},~\ref{prop:cS}, and~\ref{prop:genJT}, we have

\begin{theorem}\label{thm:Jtrivial} 
Let $L \subseteq \Sig^*$ be a $\gJ$-trivial regular language with quotient complexity $n \ge 1$. Then its syntactic complexity $\sigma(L)$ satisfies $\sigma(L) \le g(n) = \lfloor e(n-1)! \rfloor$, and this bound is tight if $|\Sig| \ge 2^{n-1}$. 
\end{theorem}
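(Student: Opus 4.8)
The plan is to obtain Theorem~\ref{thm:Jtrivial} by assembling the three preceding propositions: Proposition~\ref{prop:Jbound} yields the upper bound $\sigma(L) \le \lfloor e(n-1)! \rfloor$, and Propositions~\ref{prop:cS} and~\ref{prop:genJT} together produce, over any alphabet of size at least $2^{n-1}$, a $\gJ$-trivial witness language attaining it. No new combinatorics is needed; the work is the translation between statements about \emph{submonoids of $\cF_Q$} and statements about \emph{syntactic complexities of $\gJ$-trivial languages}, plus a minimality check for the witness automaton.

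For the upper bound, let $L$ be $\gJ$-trivial with $\kappa(L) = n$, let $\cA$ be its quotient DFA on a state set $Q$ with $|Q| = n$, and let $M$ be the syntactic monoid of $L$, which is isomorphic to the transition monoid of $\cA$. Since $L$ is $\gJ$-trivial, $M$ is $\gJ$-trivial, so Theorem~\ref{thm:saito} gives $M \subseteq \cF_Q$; that is, $M$ is a $\gJ$-trivial submonoid of $\cF_Q$. Proposition~\ref{prop:Jbound} then bounds $|M| \le \lfloor e(n-1)! \rfloor$. The syntactic semigroup of $L$ embeds in $M$ and $|M|$ exceeds it by at most one (it is $M$ itself unless no nonempty word performs $\tid_Q$), so $\sigma(L) \le |M| \le \lfloor e(n-1)! \rfloor = g(n)$.

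For tightness when $|\Sig| \ge 2^{n-1}$, I would take $\cA = (Q, \Sig, \delta, 1, \{n\})$ in which $2^{n-1}$ of the letters act as the (pairwise distinct) transformations in $\cG\cS_n$ and any remaining letters repeat transformations already present. By Proposition~\ref{prop:genJT} the transition semigroup of $\cA$ is $\langle \cG\cS_n \rangle = \cS_n$, which by Proposition~\ref{prop:cS} is a $\gJ$-trivial monoid of size $g(n)$; moreover $\tid_Q = t_Q \in \cG\cS_n$, so this set is already a monoid and the transition monoid of $\cA$ also equals $\cS_n$. Next I would verify that $\cA$ is minimal: since $\trmax \in \cE(\pi_{\{n\}}) \subseteq \cS_n = \langle \cG\cS_n \rangle$, some word $w \in \Sig^+$ performs $\trmax = [2,3,\ldots,n,n]$, whence state $p$ is reached from $1$ by $w^{p-1}$ and, for $p < q$, the word $w^{n-q}$ separates $p$ from $q$ --- exactly the argument of Part~2 of Lemma~\ref{lem:trmax} with $w$ in place of the single letter there. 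Thus $\cA$ is the quotient DFA of $L = L(\cA)$, so $\kappa(L) = n$; its syntactic monoid is $\cS_n$, so $L$ is $\gJ$-trivial; and its syntactic semigroup equals $\cS_n$ (because $\tid_Q \in \cS_n$), giving $\sigma(L) = g(n)$.

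I do not anticipate a genuine obstacle, since the real content lives in the propositions already proved --- the counting identity in Proposition~\ref{prop:Jbound}, the proof that $\cS_n$ is a $\gJ$-trivial monoid of the stated size in Proposition~\ref{prop:cS}, and the inductive construction of the $2^{n-1}$-element generating set in Proposition~\ref{prop:genJT}. The two spots that need care are bookkeeping ones: (i) the distinction between the syntactic semigroup, whose cardinality is $\sigma(L)$, and the syntactic monoid, to which Theorem~\ref{thm:saito} and Proposition~\ref{prop:Jbound} apply --- harmless here because $\tid_Q$ already belongs to $\cG\cS_n$, hence to the transition semigroup of the witness; and (ii) the minimality of the witness DFA, for which $\trmax$ again does the job. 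The boundary case $n = 1$ is degenerate and is checked by hand.
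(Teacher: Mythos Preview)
Your proposal is correct and follows exactly the paper's approach: the paper's proof of Theorem~\ref{thm:Jtrivial} is the single sentence ``Now, by Propositions~\ref{prop:Jbound},~\ref{prop:cS}, and~\ref{prop:genJT}, we have,'' and you have filled in precisely the translation layer (monoid vs.\ semigroup, minimality via $\trmax$) that this sentence implicitly relies on. Your treatment of the bookkeeping points (i) and (ii) is accurate and mirrors the paper's setup via Lemma~\ref{lem:trmax} and the remark preceding Proposition~\ref{prop:Jbound}.
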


\begin{remark}
It was shown by Saito~\cite{Sai98} that, if $S$ is a $\gJ$-trivial submonoid of~$\cF_Q$, then $\Orbit(S) = \{\Orbit(t) \mid t \in S\} \subseteq \rmPi_Q$ forms a $\join$-semilattice, called a \emph{$\gJ$-$\join$-semilattice}, such that $\Max(\Orbit(t) \join \Orbit(s)) = \Fix(t) \cap \Fix(s)$. Let $\cP_\join(\rmPi_Q)$ be the set of all $\gJ$-$\join$-semilattices that are subsets of $\rmPi_Q$. A maximal $\gJ$-trivial submonoid $S$ of $\cF_Q$ corresponds to a maximal element $P$ in $\cP_\join(\rmPi_Q)$, with respect to set inclusion, such that $S = \bigcup_{\pi \in P}\cE(\pi)$. $P \in \cP_\join(\rmPi_Q)$ is called \emph{full} if $\{\Max(\pi) \mid \pi \in P\} = \cP_n(Q)$, which is an maximal element in $\cP_\join(\rmPi_Q)$ with respect to set inclusion. The monoid $\cS_n$ then corresponds to a full $\gJ$-$\join$-semilattice, and hence it is maximal. Saito described all maximal $\gJ$-trivial submonoid of $\cF_Q$ and those corresponding to full $\gJ$-$\join$-semilattices. However, here we consider the $\gJ$-trivial submonoid of $\cF_Q$ with maximum cardinality. 
\end{remark}

\begin{remark}
The number $\lfloor e(n-1)! \rfloor$ also appears in the paper of Brzozowski and Liu~\cite{BrLiu12} as a lower bound and the conjectured upper bound for the syntactic complexity of definite languages. However, the semigroup $B_n$ with this cardinality in~\cite{BrLiu12} for definite languages is not isomorphic to~$\cS_n$, since $B_n$ is not $\gJ$-trivial. 
\end{remark}

\section{Quotient Complexity of the Reversal of $\gR$- and $\gJ$-Trivial Regular Languages}\label{sec:rev}

In this section we consider \emph{nondeterministic finite automata} (NFA's). An NFA $\cN$ is a quintuple $\cN = (Q, \Sig, \delta, I, F)$, where $Q$, $\Sig$, and $F$ are as in a DFA, $\delta : Q \times \Sig \to 2^Q$ is the nondeterministic transition function, and $I$ is the set of initial states. For any word $w \in \Sig^*$, the \emph{reverse} of $w$ is defined inductively as follows: $w^R = \eps$ if $w = \eps$, and $w^R = u^Ra$ if $w = au$ for some $a \in \Sig$ and $u \in \Sig^*$. The \emph{reverse} of any language $L$ is the language $L^R = \{w^R \mid w \in L\}$. For any finite automaton (DFA or NFA) $\cM$, we let $\cM^\rev$ denote the NFA obtained by reversing all the transitions of $\cM$ and exchanging the roles of initial and final states, and by $\cM^\deter$, the DFA obtained by applying the subset construction to~$\cM$  keeping only the reachable subsets. Then $L(\cM^\rev) = (L(\cM))^\rev$, and $L(\cM^\deter) = L(\cM)$. To simplify our proofs, we use an observation from~\cite{Brz62} that, for any NFA $\cN$ without empty states, if the automaton $\cN^\rev$ is deterministic, then the DFA $\cN^\deter$ is minimal. 

In~2004, Salomaa, Wood, and Yu~\cite{SWY04} showed that if a regular language $L$ has quotient complexity $n \ge 2$ and syntactic complexity $n^n$, then its reverse language $L^R$ has quotient complexity~$2^n$, which is maximal for regular languages. As shown in~\cite{BrYe11} and~\cite{BLY12}, for certain regular languages with maximal syntactic complexity in their subclasses, the reverse languages have maximal quotient complexity. This also holds for $\gR$- and $\gJ$-trivial regular languages. 

It was proved by Jir{\'a}skov{\'a} and Masopust~\cite{JiMa12} that, if $L$ is an $\gR$-trivial language with $n$ quotients, then $2^{n-1}$ is a tight upper bound on the quotient complexity of $L^R$, and this bound can be met if $L$ is a ternary language. Note that the syntactic semigroup of any $\gR$-trivial language is a subset of $\cF_Q$ for some set $Q$. Hence the upper bound $2^{n-1}$ on $\kappa(L^R)$ can also be reached if $L$ has $n$ quotients with maximal syntactic complexity $n!$.

For $\gJ$-trivial languages $L$, it was conjectured by Masopust\footnote{Personal communication} that, if $L$ has $n$ quotients, then the upper bound $2^{n-1}$ on the quotient complexity of $L^R$ can be reached using $n-1$ letters. This conjecture holds:

\begin{theorem}\label{thm:Jrev}
For $n \ge 2$, if $L$ is a regular $\gJ$-trivial language with quotient complexity $\kappa(L) = n$, then $\kappa(L^R) \le 2^{n-1}$. Moreover, this bound can be met by a language $L$ over an alphabet of size $n-1$.
\end{theorem}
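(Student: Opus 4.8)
The upper bound is immediate: a $\gJ$-trivial language is in particular $\gR$-trivial, so by the theorem of Jir\'{a}skov\'{a} and Masopust~\cite{JiMa12} we already have $\kappa(L^R) \le 2^{n-1}$. For tightness I would exhibit the following DFA over the alphabet $\Sig = \{a, b_1, \dots, b_{n-2}\}$ of size $n-1$ (for $n=2$ the alphabet is just $\{a\}$): let $\cB_n = (Q, \Sig, \delta, 1, \{n\})$ with $Q = \{1, \dots, n\}$, where $a$ performs the shift $\trmax = [2,3,\dots,n,n]$ and each $b_i$ performs the singular transformation ${i \choose n}$, for $1 \le i \le n-2$. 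I must then check that (i) $L(\cB_n)$ is $\gJ$-trivial, (ii) $\cB_n$ is minimal (so $\kappa(L(\cB_n)) = n$), and (iii) $\kappa(L(\cB_n)^R) = 2^{n-1}$.

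For (i) and (ii): all generators are non-decreasing, so the transition monoid $T$ is a submonoid of $\cF_Q$, and I would verify Saito's criterion (Theorem~\ref{thm:saito}) by classifying the elements of $T$. Using $\Fix(ts) = \Fix(t)\cap\Fix(s)$ for non-decreasing $t,s$ and the fact that $\Max(\pi) = \{n\}$ forces $\pi = \{Q\}$, one checks that every element of $T$ is either $\tid_Q$, or one of the $2^{n-2}$ maps that send some $S \subseteq \{1,\dots,n-2\}$ to $n$ and fix everything else (with $\Orbit$ the partition whose only non-trivial block is $S\cup\{n\}$), or a map whose decomposition uses $a$ at least once, which then necessarily has $\Orbit = \{Q\}$. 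Going through the (few) types of products, the product again lands in this list and $\Orbit(ts) = \Orbit(t)\join\Orbit(s)$ holds in each case (joins with $\{Q\}$ give $\{Q\}$; joins of two collapse-maps give the collapse-map of $S\cup S'$), so $T$ is $\gJ$-trivial. Minimality of $\cB_n$ is routine: state $i$ is reached from $1$ by $a^{i-1}$, and every non-final state $q$ has a single letter (namely $a$ if $q=n-1$, and $b_q$ otherwise) that sends $q$ to the accepting sink $n$ while fixing every other state, so the states are pairwise inequivalent.

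For (iii) I would use the observation from~\cite{Brz62}: since $\cB_n$ is an accessible DFA, the NFA $\cB_n^\rev$ has no empty states and its reverse is deterministic, whence $\cB_n^{\rev\deter}$ is the \emph{minimal} DFA of $L(\cB_n)^R$. Thus it suffices to count its reachable subsets. Because $n$ is a sink of $\cB_n$ and is the sole initial state of $\cB_n^\rev$, every reachable nonempty subset contains $n$, so at most $2^{n-1}$ subsets occur. A direct computation of the reverse transitions shows that, on a subset containing $n$, the letter $b_i$ simply adjoins $i$, and the letter $a$ adjoins $\{n-1,n\}$ (shifting the other elements down). Hence from $\{n\}$ the words over $\{b_1,\dots,b_{n-2}\}$ reach every $\{n\}\cup T$ with $T\subseteq\{1,\dots,n-2\}$, and reading $a$ first (producing $\{n-1,n\}$) and then suitable $b_i$'s reaches every $\{n-1,n\}\cup T'$ with $T'\subseteq\{1,\dots,n-2\}$; together these are exactly the $2^{n-1}$ subsets of $Q$ that contain $n$. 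Therefore $\kappa(L(\cB_n)^R) = 2^{n-1}$.

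The main obstacle is the design of the witness $\cB_n$: one simultaneously needs a shift-type letter so that the $n$-state DFA is accessible, a transition monoid that remains $\gJ$-trivial (which excludes most of the obvious families of non-decreasing generators, e.g. the neighbour transpositions ${i \choose i+1}$, whose reverses do not spread far enough), and reverse transitions rich enough to sweep out half of the power set — all within $n-1$ letters. The second subtle point is the placement of the "jump" letters at coordinates $1,\dots,n-2$ rather than some other $n-2$ coordinates: this is precisely what makes the one missing coordinate, $n-1$, be supplied automatically by $a^\rev$, which is what closes the gap between the $2^{n-2}$ subsets reachable by the $b_i$'s alone and the full $2^{n-1}$.
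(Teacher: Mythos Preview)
Your argument is correct, but it uses a different witness from the paper and is worth comparing.  The paper takes the alphabet $\{a_1,\dots,a_{n-1}\}$ where $a_i$ shifts $1,\dots,i-1$ up by one, sends $i\mapsto n$, and fixes $i+1,\dots,n$; it then observes that each $a_i$ lies in the monoid $\cS_n$ of Proposition~\ref{prop:cS}, so $\gJ$-triviality comes for free from the earlier machinery, and reachability of every $P\ni n$ in the reversed subset automaton is obtained by the single word $a_{p_1}\cdots a_{p_k}$ when $P=\{p_1,\dots,p_k,n\}$.  Your witness $\{a,b_1,\dots,b_{n-2}\}$ with $a=\trmax$ and $b_i={i\choose n}$ is structurally simpler, and your direct verification of Saito's criterion is self-contained; in fact you could shorten it further by noting that $\trmax$ and each ${i\choose n}$ already lie in $\cS_n$ (their orbit partitions are $\pi_{\{n\}}$ and $\pi_{Q\setminus\{i\}}$ respectively), so the generated monoid is $\gJ$-trivial by Proposition~\ref{prop:cS}.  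Your reverse-reachability argument is also more explicit: rather than exhibiting one word per subset, you show that the $b_i^\rev$ adjoin coordinates $1,\dots,n-2$ and a single $a^\rev$ supplies $n-1$, which is a clean way to cover all $2^{n-1}$ subsets.

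One small correction: in your minimality step you write that for $q=n-1$ the letter $a$ sends $q$ to $n$ ``while fixing every other state''.  That is false, since $a$ shifts every state.  What is true (and sufficient) is that $a$ sends $n-1$ to the final state $n$ while sending every other non-final state $p<n-1$ to the non-final state $p+1$; alternatively you can simply distinguish $p<q\le n-1$ by $a^{n-q}$.  This does not affect the validity of the proof.
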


\begin{proof}
Since any $\gJ$-trivial regular language is also $\gR$-trivial, the upper bound $2^{n-1}$ also holds for $\gJ$-trivial regular languages. 

To see that the bound is tight, consider the DFA $\cB_n = (Q, \Sig, \delta, 1, \{n\})$ such that $Q = \{1,\ldots,n\}$, $\Sig = \{a_1,\ldots,a_{n-1}\}$, where each $a_i$ defines the following transformation of $Q$: 
$$ja_i = j+1 \txt{for} 1 \le j \le i-1, ia_i = n, \txt{and} ja_i = j \txt{for} i+1 \le j \le n.$$
The DFA $\cB_n$ is minimal since, for each $i \in Q$, state $i$ can be reached by $a_{n-1}^{i-1}$, and the word $a_i$ is only accepted by state $i$. Let $L_n = L(\cB_n)$. Then $\kappa(L_n) = n$. 

Let $\cN_n = \cB_n^R$ be an NFA accepting $L_n^R$, which contains no unreachable states. The NFA $\cN_5$ is shown in Fig.~\ref{fig:JTrev}. Let $P$ be any subset of $Q$ containing~$n$. If $P = \{n\}$, then it is the initial set of states of~$\cN_n$. Otherwise, suppose $P = \{p_1,\ldots,p_k,n\}$, where $1 \le p_1 < \cdots < p_k < n$ and $1 \le k \le n-1$. Let $t = a_{p_1} \cdots a_{p_k}$ be a transformation of $Q$. Then, for any $j \in Q$, $jt = n$ if and only if $j \in~P$. Since $t \in \cT_{\cB_n}$, there exists a word $w \in \Sig^*$ such that $w$ performs the transformation~$t$, \ie, $t_w = t$. This means that, for any $p \in Q$, $\delta(p, w) = n$ if and only if $p \in P$. Hence we can reach the set $P$ of states of $\cN_n$ from the initial set of states by the word $w$. Since there are $2^{n-1}$ distinct subsets $P$ of $Q$ containing $n$, there are $2^{n-1}$ reachable states in $\cN_n^D$. 

\begin{figure}[hbt]
\begin{center}
\setlength{\unitlength}{0.00087489in}
\begingroup\makeatletter\ifx\SetFigFont\undefined%
\gdef\SetFigFont#1#2#3#4#5{%
  \reset@font\fontsize{#1}{#2pt}%
  \fontfamily{#3}\fontseries{#4}\fontshape{#5}%
  \selectfont}%
\fi\endgroup%
{\renewcommand{\dashlinestretch}{30}
\begin{picture}(4582,2391)(0,-10)
\put(1558,1587){\ellipse{382}{382}}
\put(2988,1592){\ellipse{382}{382}}
\put(4383,1592){\ellipse{382}{382}}
\put(2256,507){\ellipse{382}{382}}
\put(199,1587){\ellipse{382}{382}}
\put(199,1587){\ellipse{314}{314}}
\blacken\path(499.000,1617.000)(379.000,1587.000)(499.000,1557.000)(499.000,1617.000)
\path(379,1587)(1369,1587)
\blacken\path(1894.000,1617.000)(1774.000,1587.000)(1894.000,1557.000)(1894.000,1617.000)
\path(1774,1587)(2781,1587)
\blacken\path(3311.000,1617.000)(3191.000,1587.000)(3311.000,1557.000)(3311.000,1617.000)
\path(3191,1587)(4172,1587)
\blacken\path(455.018,1426.413)(334.000,1452.000)(428.740,1372.474)(455.018,1426.413)
\path(334,1452)(2089,597)
\blacken\path(1731.705,1325.111)(1639.000,1407.000)(1682.262,1291.119)(1731.705,1325.111)
\path(1639,1407)(2134,687)
\blacken\path(2855.738,1291.119)(2899.000,1407.000)(2806.295,1325.111)(2855.738,1291.119)
\path(2899,1407)(2404,687)
\blacken\path(4153.478,1373.415)(4249.000,1452.000)(4127.735,1427.612)(4153.478,1373.415)
\path(4249,1452)(2449,597)
\path(1774,462)(2075,462)
\blacken\path(1955.000,432.000)(2075.000,462.000)(1955.000,492.000)(1955.000,432.000)
\path(2899,1767)(2898,1770)(2895,1776)
	(2891,1786)(2885,1800)(2878,1817)
	(2871,1836)(2864,1856)(2858,1877)
	(2853,1899)(2849,1922)(2848,1945)
	(2849,1969)(2854,1992)(2861,2010)
	(2869,2026)(2877,2038)(2884,2047)
	(2891,2053)(2896,2058)(2902,2061)
	(2907,2063)(2912,2065)(2918,2067)
	(2925,2069)(2933,2072)(2944,2075)
	(2957,2078)(2972,2081)(2989,2082)
	(3006,2081)(3021,2078)(3034,2075)
	(3045,2072)(3053,2069)(3060,2067)
	(3066,2065)(3072,2063)(3076,2061)
	(3082,2058)(3087,2053)(3094,2047)
	(3101,2038)(3109,2026)(3117,2010)
	(3124,1992)(3129,1969)(3130,1945)
	(3129,1922)(3125,1899)(3120,1877)
	(3114,1856)(3107,1836)(3100,1817)
	(3093,1800)(3079,1767)
\blacken\path(3098.249,1889.186)(3079.000,1767.000)(3153.483,1865.753)(3098.249,1889.186)
\path(1459,1767)(1458,1770)(1455,1776)
	(1451,1786)(1445,1800)(1438,1817)
	(1431,1836)(1424,1856)(1418,1877)
	(1413,1899)(1409,1922)(1408,1945)
	(1409,1969)(1414,1992)(1421,2010)
	(1429,2026)(1437,2038)(1444,2047)
	(1451,2053)(1456,2058)(1462,2061)
	(1467,2063)(1472,2065)(1478,2067)
	(1485,2069)(1493,2072)(1504,2075)
	(1517,2078)(1532,2081)(1549,2082)
	(1566,2081)(1581,2078)(1594,2075)
	(1605,2072)(1613,2069)(1620,2067)
	(1626,2065)(1632,2063)(1636,2061)
	(1642,2058)(1647,2053)(1654,2047)
	(1661,2038)(1669,2026)(1677,2010)
	(1684,1992)(1689,1969)(1690,1945)
	(1689,1922)(1685,1899)(1680,1877)
	(1674,1856)(1667,1836)(1660,1817)
	(1653,1800)(1639,1767)
\blacken\path(1658.249,1889.186)(1639.000,1767.000)(1713.483,1865.753)(1658.249,1889.186)
\path(4294,1767)(4293,1770)(4290,1776)
	(4286,1786)(4280,1800)(4273,1817)
	(4266,1836)(4259,1856)(4253,1877)
	(4248,1899)(4244,1922)(4243,1945)
	(4244,1969)(4249,1992)(4256,2010)
	(4264,2026)(4272,2038)(4279,2047)
	(4286,2053)(4291,2058)(4297,2061)
	(4302,2063)(4307,2065)(4313,2067)
	(4320,2069)(4328,2072)(4339,2075)
	(4352,2078)(4367,2081)(4384,2082)
	(4401,2081)(4416,2078)(4429,2075)
	(4440,2072)(4448,2069)(4455,2067)
	(4461,2065)(4467,2063)(4471,2061)
	(4477,2058)(4482,2053)(4489,2047)
	(4496,2038)(4504,2026)(4512,2010)
	(4519,1992)(4524,1969)(4525,1945)
	(4524,1922)(4520,1899)(4515,1877)
	(4509,1856)(4502,1836)(4495,1817)
	(4488,1800)(4474,1767)
\blacken\path(4493.249,1889.186)(4474.000,1767.000)(4548.483,1865.753)(4493.249,1889.186)
\path(2179,327)(2178,324)(2175,318)
	(2171,308)(2165,294)(2158,277)
	(2151,258)(2144,238)(2138,217)
	(2133,195)(2129,172)(2128,149)
	(2129,125)(2134,102)(2141,84)
	(2149,68)(2157,56)(2164,47)
	(2171,41)(2176,36)(2182,33)
	(2187,31)(2192,29)(2198,27)
	(2205,25)(2213,22)(2224,19)
	(2237,16)(2252,13)(2269,12)
	(2286,13)(2301,16)(2314,19)
	(2325,22)(2333,25)(2340,27)
	(2346,29)(2352,31)(2356,33)
	(2362,36)(2367,41)(2374,47)
	(2381,56)(2389,68)(2397,84)
	(2404,102)(2409,125)(2410,149)
	(2409,172)(2405,195)(2400,217)
	(2394,238)(2387,258)(2380,277)
	(2373,294)(2359,327)
\blacken\path(2433.483,228.247)(2359.000,327.000)(2378.249,204.814)(2433.483,228.247)
\put(1572,1519){\makebox(0,0)[b]{\smash{{\SetFigFont{12}{14.4}{\familydefault}{\mddefault}{\updefault}2}}}}
\put(2989,1519){\makebox(0,0)[b]{\smash{{\SetFigFont{12}{14.4}{\familydefault}{\mddefault}{\updefault}3}}}}
\put(4384,1519){\makebox(0,0)[b]{\smash{{\SetFigFont{12}{14.4}{\familydefault}{\mddefault}{\updefault}4}}}}
\put(1009,912){\makebox(0,0)[b]{\smash{{\SetFigFont{12}{14.4}{\familydefault}{\mddefault}{\updefault}$a_1$}}}}
\put(1999,1137){\makebox(0,0)[b]{\smash{{\SetFigFont{12}{14.4}{\familydefault}{\mddefault}{\updefault}$a_2$}}}}
\put(2899,1137){\makebox(0,0)[b]{\smash{{\SetFigFont{12}{14.4}{\familydefault}{\mddefault}{\updefault}$a_3$}}}}
\put(3529,912){\makebox(0,0)[b]{\smash{{\SetFigFont{12}{14.4}{\familydefault}{\mddefault}{\updefault}$a_4$}}}}
\put(2256,439){\makebox(0,0)[b]{\smash{{\SetFigFont{12}{14.4}{\familydefault}{\mddefault}{\updefault}5}}}}
\put(199,1519){\makebox(0,0)[b]{\smash{{\SetFigFont{12}{14.4}{\familydefault}{\mddefault}{\updefault}1}}}}
\put(1999,102){\makebox(0,0)[b]{\smash{{\SetFigFont{12}{14.4}{\familydefault}{\mddefault}{\updefault}$\Sig$}}}}
\put(1549,2217){\makebox(0,0)[b]{\smash{{\SetFigFont{12}{14.4}{\familydefault}{\mddefault}{\updefault}$a_1$}}}}
\put(2989,2217){\makebox(0,0)[b]{\smash{{\SetFigFont{12}{14.4}{\familydefault}{\mddefault}{\updefault}$a_1,a_2$}}}}
\put(4429,2217){\makebox(0,0)[b]{\smash{{\SetFigFont{12}{14.4}{\familydefault}{\mddefault}{\updefault}$a_1,a_2,a_3$}}}}
\put(874,1677){\makebox(0,0)[b]{\smash{{\SetFigFont{12}{14.4}{\familydefault}{\mddefault}{\updefault}$a_2,a_3,a_4$}}}}
\put(2224,1677){\makebox(0,0)[b]{\smash{{\SetFigFont{12}{14.4}{\familydefault}{\mddefault}{\updefault}$a_3,a_4$}}}}
\put(3619,1677){\makebox(0,0)[b]{\smash{{\SetFigFont{12}{14.4}{\familydefault}{\mddefault}{\updefault}$a_4$}}}}
\end{picture}
}
\end{center}
\caption{NFA $\cN_5 = \cB_5^R$ for $n = 5$ accepting $L_5^R$.}
\label{fig:JTrev}
\end{figure}

Note that there is no unreachable state in $\cB_n^R$. Then the DFA $\cN_n^D$ is minimal, and $\kappa(L_n^R) = 2^{n-1}$. This shows that the upper bound $2^{n-1}$ is tight for reversal of $\gJ$-trivial regular languages. \qed
\end{proof}

Consider again the above DFA $\cB_n$. The orbit of each transformation $a_i$ is $\{\{1,2,\ldots,i,n\},\{i+1\},\{i+2\},\ldots,\{n-1\}\}$; this is exactly the partition $\pi_Z$ for $Z = \{i+1,i+2,\ldots,n\}$. So $a_i \in \cS_n$ by definition. Then the transition semigroup of $\cB_n$ is a subsemigroup of $\cS_n$. It follows that, if a $\gJ$-trivial language $L$ has $n$ quotients and syntactic semigroup $\cS_n$, then its reverse $L^R$ has the maximal quotient complexity.

\section{Conclusion}\label{sec:con}

We proved that $n!$ and $\lfloor e (n-1)! \rfloor$ are the tight upper bounds on the syntactic complexities of $\gR$- and $\gJ$-trivial languages with $n$ quotients, respectively. For $n \ge 2$, the upper bound for $\gR$-trivial languages can be met using $1 + C^n_2$ letters, and the upper bound for $\gJ$-trivial languages, using $2^{n-1}$ letters. It remains open whether the upper bound for $\gJ$-trivial languages can be met with fewer than $2^{n-1}$ letters. The syntactic complexity of $\gL$-trivial languages is also open. 
We also observed that, if $\gR$- and $\gJ$-trivial languages have maximal syntactic complexities, their reverses have maximal quotient complexities. 
The proof of Theorem~\ref{thm:Jrev} can be extended to the following template for languages $L$ in some subclass $\cC$ of regular languages: Suppose $\cA = (Q, \Sig, \delta, q_1, F)$ is the minimal DFA of $L$. To prove $\kappa(L^R) = f(n)$, where $f(n)$ is an upper bound on $\kappa(L'^R)$ for $L' \in \cC$, one can show that there are at least $f(n)$ distinct subsets $P$ of $Q$ such that $\cA$ can perform a transformation $t$ of $Q$ with $it \in F$ if and only if $i \in P$. 

\newpage
% \bibliography{Syntactic}

\begin{thebibliography}{10}

\bibitem{Brz62}
Brzozowski, J.:
\newblock {Canonical regular expressions and minimal state graphs for definite
  events}.
\newblock In: Mathematical theory of Automata. Volume 12 of MRI Symposia
  Series.
\newblock Polytechnic Press, Polytechnic Institute of Brooklyn, N.Y. (1962)
  529--561

\bibitem{BrFi80}
Brzozowski, J., Fich, F.E.:
\newblock Languages of {${\mathcal R}\/$}-trivial monoids.
\newblock J. Comput. System Sci. \textbf{20}(1) (1980)  32--49

\bibitem{BL12a}
Brzozowski, J., Li, B.:
\newblock Syntactic complexities of some classes of star-free languages.
\newblock In Kutrib, M., Moreira, N., Reis, R., eds.: DCFS 2012. Volume 7386 of
  LNCS, Springer (2012)  117--129

\bibitem{BLY12}
Brzozowski, J., Li, B., Ye, Y.:
\newblock Syntactic complexity of prefix-, suffix-, bifix-, and factor-free
  regular languages.
\newblock Theoret. Comput. Sci. \textbf{449} (2012)  37--53

\bibitem{BrLiu12}
Brzozowski, J., Liu, D.:
\newblock Syntactic complexity of finite/cofinite, definite, and reverse
  definite languages.
\newblock {\tt http://arxiv.org/abs/1103.2986} (March 2012)

\bibitem{BrYe11}
Brzozowski, J., Ye, Y.:
\newblock Syntactic complexity of ideal and closed languages.
\newblock In Mauri, G., Leporati, A., eds.: DLT 2011. Volume 6795 of LNCS,
  Springer Berlin / Heidelberg (2011)  117--128

\bibitem{GaMa09}
Ganyushkin, O., Mazorchuk, V.:
\newblock Classical Finite Transformation Semigroups: An Introduction.
\newblock Springer (2009)

\bibitem{HoKo04}
Holzer, M., K\"{o}nig, B.:
\newblock On deterministic finite automata and syntactic monoid size.
\newblock Theoret. Comput. Sci. \textbf{327}(3) (2004)  319--347

\bibitem{JiMa12}
Jir{\'a}skov{\'a}, G., Masopust, T.:
\newblock On the state and computational complexity of the reverse of acyclic
  minimal dfas.
\newblock In Moreira, N., Reis, R., eds.: CIAA 2012. Volume 7381 of LNCS,
  Springer (2012)  229--239

\bibitem{KP11}
Kl\'{\i}ma, O., Pol{\'a}k, L.:
\newblock On biautomata.
\newblock In Freund, R., Holzer, M., Mereghetti, C., Otto, F., Palano, B.,
  eds.: Third Workshop on Non-Classical Models for Automata and Applications -
  NCMA 2011, Milan, Italy, July 18--July 19, 2011. Proceedings. Volume 282,
  Austrian Computer Society (2011)  153--164

\bibitem{KLS03}
Krawetz, B., Lawrence, J., Shallit, J.:
\newblock State complexity and the monoid of transformations of a finite set.
\newblock {\tt http://arxiv.org/abs/math/0306416v1} (2003)

\bibitem{Mas70}
Maslov, A.N.:
\newblock Estimates of the number of states of finite automata.
\newblock Dokl. Akad. Nauk SSSR \textbf{194} (1970)  1266--1268 (Russian)
  {English} translation: Soviet Math. Dokl. 11 (1970), 1373--1375.

\bibitem{McNP71}
McNaughton, R., Papert, S.A.:
\newblock Counter-Free Automata. Volume~65 of M.I.T. Research Monographs.
\newblock The MIT Press (1971)

\bibitem{Myh57}
Myhill, J.:
\newblock Finite automata and the representation of events.
\newblock Wright Air Development Center Technical Report \textbf{57--624}
  (1957)

\bibitem{Pin97}
Pin, J.E.:
\newblock Syntactic semigroups.
\newblock In Rozenberg, G., Salomaa, A., eds.: Handbook of Formal Languages,
  vol.~1: Word, Language, Grammar.
\newblock Springer (1997)  679--746

\bibitem{Sai98}
Saito, T.:
\newblock {${\mathcal J}\/$}-trivial subsemigroups of finite full
  transformation semigroups.
\newblock Semigroup Forum \textbf{57} (1998)  60--68

\bibitem{SWY04}
Salomaa, A., Wood, D., S.Yu:
\newblock On the state complexity of reversals of regular languages.
\newblock Theoret. Comput. Sci. \textbf{320}(2–3) (2004)  315--329

\bibitem{Sim72}
Simon, I.:
\newblock Hierarchies of Events With Dot-Depth One.
\newblock PhD thesis, Dept. of Applied Analysis \& Computer Science, University
  of Waterloo, Waterloo, Ont., Canada (1972)

\bibitem{Sim75}
Simon, I.:
\newblock Piecewise testable events.
\newblock In: Proceedings of the 2nd GI Conference on Automata Theory and
  Formal Languages, London, UK, Springer-Verlag (1975)  214--222

\bibitem{Yu97}
Yu, S.:
\newblock Regular languages.
\newblock In Rozenberg, G., Salomaa, A., eds.: Handbook of Formal Languages,
  vol.~1: Word, Language, Grammar.
\newblock Springer (1997)  41--110

\end{thebibliography}
% \bibliographystyle{splncs_srt2}

\providecommand{\noopsort}[1]{}

\end{document}